\DeclareFontFamily{U}{mathx}{\hyphenchar\font45}
\DeclareFontShape{U}{mathx}{m}{n}{
      <5> <6> <7> <8> <9> <10>
      <10.95> <12> <14.4> <17.28> <20.74> <24.88>
      mathx10
      }{}
\DeclareSymbolFont{mathx}{U}{mathx}{m}{n}
\DeclareMathSymbol{\bigtimes}{1}{mathx}{"91}
\definecolor{DarkRed}{rgb}{0.5,0.1,0.1}
\definecolor{DarkBlue}{rgb}{0.1,0.1,0.5}
\definecolor{ForestGreen}{rgb}{0.1333,0.5451,0.1333}
\definecolor{Red}{rgb}{0.9,0,0}
\crefname{property}{property}{Property}
\crefname{equation}{eq}{Eq}
\def\BState{\State\hskip-\ALG@thistlm}
\renewcommand{\leq}{\leqslant}
\renewcommand{\geq}{\geqslant}
\newtheorem{theorem}{Theorem}
\newtheorem{lemma}{Lemma}[section]
\newtheorem{proposition}[lemma]{Proposition}
\newtheorem{claim}[lemma]{Claim}
\newtheorem{fact}[lemma]{Fact}
\newtheorem{definition}[lemma]{Definition}
\newtheorem*{claim*}{Claim}
\newtheorem*{proposition*}{Proposition}
\newtheorem*{lemma*}{Lemma}
\newtheorem*{problem*}{Problem}
\crefname{lemma}{Lemma}{Lemmas}
\crefname{claim}{Claim}{Claims}
\newenvironment{result}{\begin{mdframed}[backgroundcolor=lightgray!40,topline=false,rightline=false,leftline=false,bottomline=false,innertopmargin=2pt]
\begin{theorem}}{\end{theorem}\end{mdframed}}
\newtheorem{observation}[lemma]{Observation}
\newtheoremstyle{restate}{}{}{\itshape}{}{\bfseries}{~(restated).}{.5em}{\thmnote{#3}}
\theoremstyle{restate}
\renewcommand{\qed}{\nobreak \ifvmode \relax \else
      \ifdim\lastskip<1.5em \hskip-\lastskip
      \hskip1.5em plus0em minus0.5em \fi \nobreak
      \vrule height0.75em width0.5em depth0.25em\fi}
\newcommand{\eps}{\ensuremath{\varepsilon}}
\newcommand{\Paren}[1]{\Big(#1\Big)}
\newcommand{\bracket}[1]{\left[#1\right]}
\newcommand{\paren}[1]{\ensuremath{\left(#1\right)}\xspace}
\newcommand{\card}[1]{\left\vert{#1}\right\vert}
\newcommand{\IR}{\ensuremath{\mathbb{R}}}
\newcommand{\IN}{\ensuremath{\mathbb{N}}}
\newcommand{\IS}{\ensuremath{\mathbb{S}}}
\newcommand{\set}[1]{\ensuremath{\left\{ #1 \right\}}}
\newcommand{\poly}{\mbox{\rm poly}}
\DeclareMathOperator*{\Exp}{\ensuremath{{\mathbb{E}}}}
\DeclareMathOperator*{\Prob}{\ensuremath{\textnormal{Pr}}}
\renewcommand{\Pr}{\Prob}
\newenvironment{tbox}{\begin{tcolorbox}[
		enlarge top by=5pt,
		enlarge bottom by=5pt,
		 breakable,
		 boxsep=0pt,
                  left=4pt,
                  right=4pt,
                  top=10pt,
                  arc=0pt,
                  boxrule=1pt,toprule=1pt,
                  colback=white
                  ]
	}
{\end{tcolorbox}}
\newcommand{\event}{\ensuremath{\mathcal{E}}}
\newcommand{\II}{\ensuremath{\mathbb{I}}}
\newcommand{\mireal}[1][]{
  \ifx\relax#1\relax%
    \II(\mione \,; \mitwo)%
  \else%
    \II(\mione \,; \mitwo\mid #1)%
  \fi
}
\newcommand{\E}{\mathop{\mathbb{E}}}
\newcommand{\PP}[1]{{P}^{#1}}
\newcommand{\defn}[1]{\textbf{#1}}
\newcommand{\Set}{\ensuremath{\mathcal{S}}}
\newcommand{\DS}{\ensuremath{\mathbf{D}}}
\newcommand{\RankOp}{\textsc{Rank}}
\renewcommand{\IS}[1]{\ensuremath{\mathcal{I}(#1)}}
\newcommand{\dense}{\ensuremath{\textnormal{\texttt{density}}}\xspace}
\newcommand{\pw}{\ensuremath{{w}}\xspace}
\newcommand{\rw}{\ensuremath{{r}}\xspace}
\newcommand{\Win}{\ensuremath{\textnormal{\texttt{Win}}}\xspace}
\newcommand{\TT}{\ensuremath{\mathcal{T}}}
\newcommand{\DP}{\ensuremath{D\!P}}
\renewcommand{\PP}{\ensuremath{P}}
\newcommand{\size}[2]{d^{#2}(#1)}
\newif\ifsub
\newcommand{\todo}[1]{}
\newcommand{\todo}[1]{{\color{blue} #1 \marginpar{todo}}}
\title{Tight Bounds for Monotone Minimal Perfect Hashing}
\author{Sepehr Assadi\\Rutgers University \and Mart\'{\i}n Farach-Colton\\Rutgers University \and William Kuszmaul\\MIT}
\date{}
\begin{document}
\maketitle

\pagenumbering{roman}

\begin{abstract}

\bigskip

The \textbf{monotone minimal perfect hash function} (MMPHF) problem is the following indexing problem. Given a set $S = \{s_1,\ldots,s_n\}$ of $n$ distinct keys from a universe $U$ of size $u$, create a data structure $\DS$ that answers the following query:

\[\RankOp(q) = \begin{cases}
  \text{rank of } q \text{ in } S  & q\in S\\
  \text{arbitrary answer} & \text{otherwise.}
\end{cases}
\]
Solutions to the MMPHF problem are in widespread use in both theory and practice.  

The best upper bound known for the problem encodes $\DS$ in $O(n\log\log\log u)$ bits and performs queries in $O(\log u)$ time.  It has been an open problem to either improve the space upper bound or to show that this somewhat odd looking bound is tight.

In this paper, we show the latter: any data structure (deterministic or randomized) for monotone minimal perfect hashing of any collection of $n$ elements from a universe of size $u$ requires $\Omega(n \cdot \log\log\log{u})$ expected bits to answer every query correctly. 

We achieve our lower bound by defining a graph $\mathbf{G}$ where the nodes are the possible ${u \choose n}$ inputs and where two nodes are adjacent if they cannot share the same $\DS$.  The size of $\DS$ is then lower bounded by the log of the chromatic number of $\mathbf{G}$.  Finally, we show that the fractional chromatic number (and hence the chromatic number) of $\mathbf{G}$ is lower bounded by $2^{\Omega(n \log\log\log u)}$.

\end{abstract}
\clearpage

\setcounter{tocdepth}{3}
\tableofcontents

\clearpage

\pagenumbering{arabic}
\setcounter{page}{1}


\section{Introduction} 

The \textbf{monotone minimal perfect hash function} (MMPHF) problem is the following indexing problem. Given a set $S = \{s_1,\ldots,s_n\}$ of $n$ distinct keys from a universe $U$ of size $u$, create a data structure $\DS$ that answers the following query:

\[\RankOp(q) = \begin{cases}
  \text{rank of } q \text{ in } S  & q\in S\\
  \text{arbitrary answer} & \text{otherwise.}
\end{cases}
\]
The name of the problem comes from interpreting the data structure \DS{} as a hash function: given a sorted array 
$A = [a_1,\ldots,a_n]$, $\DS$ is a function mapping each $a_i$ to its position $i$. Such a hash function is \emph{minimal}, meaning that it maps $n$ items to $n$ distinct positions, and \emph{monotone}, meaning that whenever $a_i < a_j$ we have $\DS(a_i) < \DS(a_j)$, and vice versa. 

It may seem strange at first glance that $\DS$ is permitted to return arbitrary answers on negative queries. A key insight, however, is that this relaxation allows for asymptotic improvements in space efficiency: whereas the set $\Set$ would require $\Omega(n \log (u/n))$ bits to encode, Belazzougui, Boldi,  Pagh and Vigna~\cite{BelazzouguiBPV09} show that it is possible to construct an MMPHF $\DS$ using as few as $O(n \log \log \log u)$ bits, while supporting $O(\log u)$-time queries. 

The remarkable space efficiency of MMPHF makes it useful for a variety of practical applications (e.g., in  security \cite{boldyreva2011order}, key-value stores \cite{lim2011silt} and information retrieval \cite{navarro2014spaces}).  A high-performance implementation can be found in the Sux4J library \cite{sux4j, belazzougui2008theory}. MMPHF has also been widely used in the theory community for the design of space-efficient combinatorial pattern-matching algorithms (see, e.g., ~\cite{belazzougui2014alphabet, gagie2020fully, belazzougui2014linear, belazzougui2015optimal, clifford2015dictionary, belazzougui2020linear, belazzougui2016fully, grossi2010optimal}).

Despite the widespread use of MMPHF, it remains an open question \cite{BelazzouguiBPV09, Boldi15, dietzfelbinger20184} to determine the optimal bounds for solving this problem. 
The best  lower bound achieved so far \cite{belazzougui2008theory, dietzfelbinger20184} is $\Omega(n)$ bits (which follows immediately from the same lower bound for minimal perfect hashing \cite{mehlhorn1982program}). Even disregarding applications (and the running time to answer queries), the information-theoretic question as to how many bits a MMPHF requires has been posed as a problem of independent combinatorial interest \cite{dietzfelbinger20184}. 

\paragraph{Our result.} We fully settle this question by establishing the following result: 
\begin{result}[Formalized in~\Cref{thm:main}]\label{res:main}
	Any data structure (deterministic or randomized) for monotone minimal perfect hashing of any collection of $n$ elements from a universe of size $u$ requires $\Omega(n \log\log\log{u})$ expected bits to answer every query correctly. The lower bound holds whenever $u$ is at least $n^{1+1 / \sqrt{\log n}}$ and at most $\exp\paren{\exp(\poly(n))}$. 
\end{result}

Thus, somewhat surprisingly, the $O(n \log \log \log u)$ bound achieved by \cite{belazzougui2008theory} is asymptotically optimal. We also note that the boundary conditions on $u$ in~\Cref{res:main} are natural in the following sense.  There are two trivial solutions for the MMPHF.  One encodes the entire input set $S$ in $O(u)$ bits and the other builds a perfect hash table mapping from elements of $S$ to their rank using $O(n\log{n})$ expected space.
Thus, when $u$ is very small, for example $u = O(n)$, the first solution achieves $O(u) = o(n\log\log\log u)$ bits.  On the other hand, when $u$ is very large, that is when $u$ is even beyond $\exp(\exp(\poly(n)))$, then the $O(n \log n)$-bit solution uses $o(n\log\log\log u)$ bits.  Our lower bound in~\Cref{res:main} covers almost the entire range in between.

The lower bound achieved by \Cref{res:main} is remarkably general: it applies independently of the running time of the data structure; and it applies even to randomized data structures that are permitted to store their random bits for free.

\paragraph{Our techniques.} The most intuitive approach
toward proving a lower bound of $d$ bits on the size 
of an MMPHF is to encode a $d$-bit string into the state of the data structure. This approach is already hindered by the fact that MMPHFs only support \emph{positive queries}, however. If the user already knows which elements are in the input, then the MMPHF encodes no interesting information --- but if the user only has partial information about the input, then the user can only get useful information from a small portion of possible MMPHF queries. The previous $\Omega(n)$ lower bound of~\cite{mehlhorn1982program, belazzougui2008theory, dietzfelbinger20184} addresses this as follows: consider any bit-string $x \in \set{0,1}^d$ and define: 
\[S(x) := \set{3,6,\ldots,3d} \cup \{3i + 1 \mid i \in [d], x_i = 1\} \cup \{3i - 1 \mid i \in [d], x_i = 0\}.\]
For every $i \in [d]$, firstly, $3i$ belongs to $S(x)$ and thus is a positive query, and secondly, $\RankOp(3i) = 2 \cdot (i-1) + x_i$. This allows us to recover $x$ from any MMPHF for $S(x)$, proving a lower bound of $d = \Omega(n)$ bits for MMPHF on size-$n$ subsets of universe $[3n+1]$. 
This approach, however, seems to be stuck at proving any $\omega(n)$ lower bound as these ``direct encodings'' ignore the delicate interaction between different elements in the input set\footnote{\emph{Any} lower bound of $d$ bits for a data structure immediately implies an encoding of $d$-bit strings in the state of the data structure by just assigning one bit-string to each state. This means that there is never a formal proof that one \emph{cannot} encode a bit-string in a data structure and still prove a lower bound.}. 

To get around these obstacles, we take a fundamentally different approach to proving \Cref{res:main}. We construct a ``conflict graph'' $G$ whose vertices are all the possible inputs to an MMPHF problem for a fixed $n$ and $u$. Two vertices are adjacent in $G$ if they cannot have the same MMPHF representation, that is, if the vertices share an element but with a different rank. Any MMPHF induces a proper coloring of this graph, where the color of a vertex corresponds to its MMPHF representation. As a result, the \emph{chromatic number} of the conflict graph is a lower bound on how many different MMPHF representations we must have, which implies that some input must have a representation of size at least $\log\chi(G)$ bits. This reduces our task to the combinatorial problem of lower bounding $\chi(G)$.\footnote{Slightly more care must be taken when bounding the \emph{expected} size of a MMPHF that is permitted to take different sizes on different inputs.}

The problem of bounding chromatic number of graphs defined over these types of set-systems has a rich history in the discrete math literature; see, e.g.~\cite{erdHos1966chromatic,furedi1992interval,duffus1995shift,simonyi2011directed}. For instance, Erd\H{o}s and Hajnal~\cite{erdHos1966chromatic} study \emph{shift-graphs} that have vertices corresponding to $n$-element subsets of $[u]$ and edges between vertices $(a_1,a_2,\ldots,a_n)$ and $(a_2,\ldots,a_n,a_{n+1})$ for all $a_1 < a_2 < \ldots < a_{n+1}$. They prove that 
the chromatic number of the shift-graph is $(1+o(1)) \cdot \log^{(n-1)}(u)$, namely, the $(n - 1)$-th iterated logarithm of $u$. The shift-graph is a subgraph of our conflict graph. Thus, by taking $u = 2 \upuparrows (n+1)$, i.e., the \emph{tower} of twos of height $n+1$, we can have $\chi(G) = 2^{\omega(n)}$, and thus prove an $\omega(n)$ lower bound for MMPHF on $n$-subsets of (extremely large) universes of size $u=2\upuparrows (n+1)$. This is the starting point of our approach. We now need to dramatically decrease the size of the universe, while also dramatically increasing the bound on the chromatic number by considering the conflict graph itself, and not only its shift-subgraph. 

To lower bound the chromatic number of the conflict graph, we consider the relaxation of this problem via \emph{fractional colorings} (see~\Cref{sec:fc}). Given that this latter problem can be formulated as a linear program (LP), a natural way for proving a lower bound on its value is to exhibit a feasible \emph{dual} solution instead\footnote{This is an inherently different technique than the one used in~\cite{erdHos1966chromatic} for the shift-graph, as it is known that the fractional chromatic number of the shift-graph is $O(1)$ (see, e.g.~\cite{simonyi2011directed}).}. This corresponds to the following problem: exhibit a distribution on vertices of the graph so that 
for any independent set, the probability that a vertex sampled from the distribution belongs to the independent set is bounded by $p$; this then implies that the fractional chromatic number (and in turn the chromatic number) are lower bounded by $1/p$. The main technical novelty of our work lies in the introduction of a highly non-trivial such distribution and the analysis of this probability bound for each independent set (we postpone the overview of this part to~\Cref{sec:overview} after we setup the required background). This allows us to lower bound the (fractional) chromatic number of the conflict-graph by $\Omega(n\log{n})$ when the universe is of size $u=2^{2^{\text{poly}(n)}}$ which gives an $\Omega(n\log\log\log{u})$ lower bound for MMPHF on such universes.  

Working with fractional colorings, beside being an immensely helpful analytical tool, has several additional benefits for us. Firstly, unlike standard (integral) colorings, fractional colorings admit a natural \emph{direct product} property over a certain union of graphs; this allows us to extend the lower bound for MMPHF from universes of size doubly exponential in $n$ (which are admittedly not the most interesting setting of parameters), all the way down to universes of size $n^{1+o(1)}$. Secondly, unlike the (integral) chromatic number, which yields a lower bound only on the space of deterministic MMPHFs, we show that lower bounding the fractional chromatic number allows us to prove a lower bound even for randomized MMPHFs that have access to their randomness for free. We believe this technique, namely, defining a proper conflict graph and bounding its fractional coloring by exhibiting a feasible dual solution, may be applicable to many other data structure problems and is therefore interesting in its own right.


\section{Preliminaries}\label{sec:prelim}

\paragraph{Notation.} For any integer $t\geq s \geq 1$, we let $[t]:=\set{1,\ldots,t}$ and let $[s,t] = \{s, \ldots, t\}$. For a tuple $(X_1,\ldots,X_t)$, we further define  
$X_{<i} := (X_1,\ldots,X_{i-1})$ and $X_{-i} := (X_1,\ldots,X_{i-1},X_{i+1},\ldots,X_t)$.

\subsection{Problem Definition and Model of Computation}\label{sec:problem} 
For any integer $n,u \geq 1$, we let $\DS(n,u)$ be an MMPHF indexing algorithm for size-$n$ subsets of $[u]$.  That is, if $\mathcal{S}_{n,u} = \{ S \subseteq [u] ~\text{s.t.}~ |S| = n\}$ then  for all $S \in \mathcal{S}_{m,u}$, $\DS(S)$ is the MMPHF index for $S$.  

For any fixed choice of random bits $r$, we use $\DS^r$ to denote the resulting MMPHF with random bits $r$. Note that for any fixed choice of $r$, $\DS^r$ is deterministic. 
For any $S \in \mathcal{S}_{n,u}$ and randomness $r$, define $\size{S}{r}$ as the size in bits of the MMPHF index  $\DS^r(S)$. Define:
	\[
	    d (n,u):= \max_{S \in \mathcal{S}_{n,u}} \Exp_r\bracket{\size{S}{r}}.
	\]  
	When $n$ and $u$ are clear, we drop them and refer simply to $\DS$ and $d$.
	
    In this definition of size, we are giving the MMPHF a big advantage: we are not charging the algorithm for storing its randomness. In other words, the algorithm has access to a tape of random bits chosen independent of the input that it can use for both 
    creating the index as well as answering the queries. Furthermore, we also allow the algorithm unbounded computation time. Thus, the only measure of interest
    for us is the \emph{size} of the index. Finally, any deterministic MMPHF in this model is simply a randomized MMPHF that ignores its random bits and thus we will only focus on randomized MMPHFs from now on. 

\subsection{Fractional Colorings}\label{sec:fc}
A key tool that we use in establishing our lower bound is the notion of a \textbf{fractional coloring} of a graph. We now review the basics of fractional colorings, which we need in our proofs. 

Let $G=(V,E)$ be any undirected graph. A proper coloring of $G$ is any assignment of colors to vertices of $G$ so that no edge is monochromatic. The chromatic number $\chi(G)$ is the minimum 
number of colors in any proper coloring of $G$. The fractional relaxation of chromatic number can then be defined as follows.

Let $\IS{G} \subseteq 2^{V}$ denote the set of all independent sets in $G$, and for any vertex $v \in V$, define $\IS{G,v}$ as the set of all independent sets that contain the vertex $v$. 
 A fractional coloring of $G$ is any assignment of $x = (x_1, 
 \ldots, x_{\IS{G}}), 0\leq x_i \leq 1$, to the independent sets of $G$ satisfying the following constraint: 
\begin{align*}
	\text{for every vertex $v \in V$:} \quad \sum_{I \in \IS{G,v}} x_I \geq 1. 
\end{align*}
The \defn{value} $\card{x}$ of a fractional coloring $x$ is given by $\sum_{I \in \IS{G,v}} x_I$. 

The \defn{fractional chromatic number} $\chi_f(G)$ is the minimum value of any fractional coloring of $G$. This quantity can be formalized as a linear program (LP): 
\begin{align}\label{eq:chif}
	\chi_f(G) := \min_{x \in \IR^{\IS{G}}_{\geq 0}} &\sum_{I \in \IS{G}}\hspace{-5pt} x_I \quad \text{subject to} \quad \sum_{v \in \IS{G,v}}\hspace{-8pt} x_I \geq 1 \quad \forall v \in V. 
\end{align}

Any proper coloring of $G$ with $k$ colors induces a solution $x$ of value $k$ to this LP, where $x_I$ is set to $1$ for the independent sets $I$ that correspond to color classes in the coloring.  Thus the LP given by \Cref{eq:chif} is indeed a relaxation of the original coloring problem. 
\begin{fact}\label{fact:chif-chi}
	For any graph $G$, $\chi_f(G) \leq \chi(G)$. 
\end{fact}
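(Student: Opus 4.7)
The plan is to exhibit a feasible solution to the linear program in~\Cref{eq:chif} whose value is exactly $\chi(G)$; since $\chi_f(G)$ is by definition the minimum value attained over all feasible solutions of this LP, this immediately yields $\chi_f(G) \leq \chi(G)$. The solution will just be the integral coloring itself, reinterpreted as a $\{0,1\}$-valued assignment over $\IS{G}$.

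First I would fix an optimal proper coloring of $G$ using exactly $\chi(G)$ colors and let $I_1,\ldots,I_{\chi(G)}$ denote its color classes. By definition of a proper coloring, no edge of $G$ has both endpoints in the same $I_j$, so each $I_j$ is an independent set and hence $I_j \in \IS{G}$. I would then define the candidate $x \in \IR^{\IS{G}}_{\geq 0}$ by setting $x_{I_j} = 1$ for every $j \in [\chi(G)]$ and $x_I = 0$ for every other $I \in \IS{G}$.

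Next I would verify feasibility. Every vertex $v \in V$ is assigned exactly one color, say $j(v) \in [\chi(G)]$, so $v \in I_{j(v)}$, which places $I_{j(v)}$ in $\IS{G,v}$. Consequently $\sum_{I \in \IS{G,v}} x_I \geq x_{I_{j(v)}} = 1$, so the vertex constraint in~\Cref{eq:chif} is satisfied. The objective value of $x$ is $\sum_{I \in \IS{G}} x_I = \chi(G)$, and taking the minimum over all feasible fractional colorings gives $\chi_f(G) \leq \chi(G)$.

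There is no real obstacle here: the statement is essentially the observation that any integral proper coloring is a special case of a $\{0,1\}$-valued fractional coloring. The only point needing a sentence of care is confirming that color classes of a proper coloring are independent sets, which is immediate from the definition.
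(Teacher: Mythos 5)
Your proof is correct and matches the paper's own argument: the paper states, immediately before Fact~\ref{fact:chif-chi}, that a proper coloring with $k$ colors induces a feasible LP solution of value $k$ by setting $x_I = 1$ for each color class, which is precisely your construction. No gaps.
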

It is worth mentioning that at the same time $\chi(G) = O(\log{\card{V(G)}}) \cdot \chi_f(G)$ using the standard randomized rounding argument (we do not use this direction explicitly in our paper). 

A primal-dual analysis of the fractional-chromatic-number LP implies the following results. 
These results are standard but we provide proofs in \Cref{app:fc} for completeness. 

\begin{proposition}\label{prop:chif-product}
	Let $G_1 = (V_1,E_1)$ and $G_2 = (V_2,E_2)$ be arbitrary graphs. Define $G_1 \vee G_2$ as a graph on vertices $V_1 \times V_2$ and define an edge between vertices $(v_1,v_2)$ and $(w_1,w_2)$ whenever $(v_1,w_1)$ is an edge in $G_1$ 
	\underline{or} $(v_2,w_2)$ is an edge in $G_2$. Then, $\chi_f(G_1 \vee G_2) = \chi_f(G_1) \cdot \chi_f(G_2)$. 
\end{proposition}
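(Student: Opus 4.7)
The plan is to prove the two inequalities $\chi_f(G_1 \vee G_2) \leq \chi_f(G_1)\cdot\chi_f(G_2)$ and $\chi_f(G_1 \vee G_2) \geq \chi_f(G_1)\cdot\chi_f(G_2)$ separately. The upper direction will come from a ``tensoring'' construction applied to optimal primal solutions, while the matching lower direction will come from tensoring optimal dual solutions. The key structural observation powering both steps is that because edges of $G_1\vee G_2$ come from a \emph{disjunction} of edges in $G_1$ and $G_2$, a set of the form $I_1\times I_2\subseteq V_1\times V_2$ is independent in $G_1\vee G_2$ whenever $I_1\in\IS{G_1}$ and $I_2\in\IS{G_2}$. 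Indeed, for any two distinct pairs $(v_1,v_2),(w_1,w_2)\in I_1\times I_2$ we have $v_1,w_1\in I_1$ and $v_2,w_2\in I_2$, so neither $(v_1,w_1)$ is an edge of $G_1$ nor $(v_2,w_2)$ is an edge of $G_2$. Conversely, any independent set of $G_1\vee G_2$ is contained in a product independent set, because its coordinate projections must themselves be independent.

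For the upper bound, I take optimal fractional colorings $x\in\IR^{\IS{G_1}}_{\geq 0}$ and $y\in\IR^{\IS{G_2}}_{\geq 0}$ of $G_1$ and $G_2$ with values $\chi_f(G_1)$ and $\chi_f(G_2)$, and define $z_{I_1\times I_2} := x_{I_1}\cdot y_{I_2}$ for every pair $(I_1,I_2)\in\IS{G_1}\times\IS{G_2}$ (and $z_J=0$ for any independent set $J$ not of the product form). By the observation above, $z$ is supported on genuine independent sets of $G_1\vee G_2$. For any vertex $(v_1,v_2)\in V_1\times V_2$, the coverage constraint factors:
\[
\sum_{I_1\ni v_1}\sum_{I_2\ni v_2} x_{I_1}\,y_{I_2} \;=\; \Bigl(\sum_{I_1\ni v_1} x_{I_1}\Bigr)\Bigl(\sum_{I_2\ni v_2} y_{I_2}\Bigr) \;\geq\; 1\cdot 1 \;=\; 1,
\]
so $z$ is a feasible fractional coloring of $G_1\vee G_2$, and its total value is $\chi_f(G_1)\cdot\chi_f(G_2)$.

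For the lower bound, I pass to the LP dual of~\eqref{eq:chif}, which assigns a nonnegative weight $y_v$ to each vertex maximizing $\sum_v y_v$ subject to $\sum_{v\in I} y_v\leq 1$ for every $I\in\IS{G}$; by strong LP duality the optimum equals $\chi_f(G)$. Let $y^{(1)}$ and $y^{(2)}$ be optimal dual solutions for $G_1$ and $G_2$, and define the product weighting $y_{(v_1,v_2)} := y^{(1)}_{v_1}\cdot y^{(2)}_{v_2}$ on $V_1\times V_2$. For any independent set $J$ of $G_1\vee G_2$, its projections $J_1,J_2$ onto the two coordinates are independent in $G_1$ and $G_2$ respectively, so $J\subseteq J_1\times J_2$ and
\[
\sum_{(v_1,v_2)\in J} y_{(v_1,v_2)} \;\leq\; \sum_{v_1\in J_1} y^{(1)}_{v_1}\cdot \sum_{v_2\in J_2} y^{(2)}_{v_2} \;\leq\; 1.
\]
Thus $y$ is dual-feasible for $G_1\vee G_2$ with total weight $\chi_f(G_1)\cdot\chi_f(G_2)$, yielding $\chi_f(G_1\vee G_2)\geq\chi_f(G_1)\cdot\chi_f(G_2)$ and completing the proof.

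The only subtle point, and the one I would flag as the real content, is the independent-set characterization: the fact that the disjunctive edge rule forces every independent set of $G_1\vee G_2$ to sit inside a product of independent sets. This is precisely where the choice of $\vee$ (as opposed to some other graph product) is essential, and it is what lets both the primal factoring and the dual constraint factoring go through cleanly. Once that is observed, the rest is routine checking of the LP constraints and values.
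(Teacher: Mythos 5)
Your proof takes essentially the same approach as the paper: tensoring an optimal primal solution for the upper bound and an optimal dual solution for the lower bound, with the structure of independent sets in $G_1 \vee G_2$ doing the work. In fact, you are slightly more careful than the paper on one point --- you correctly note that an independent set of $G_1 \vee G_2$ is merely \emph{contained in} a product of independent sets (rather than necessarily being equal to one), which is the statement actually needed for the dual-feasibility bound.
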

\Cref{prop:chif-product} allows us to determine $\chi_f$ of a product of several graphs by focusing on each individual graph separately.

\begin{proposition}\label{prop:chif-distribution}
	For any graph $G=(V,E)$, 
	\[
	    \chi_f(G) = \max_{\textnormal{distribution $\mu$ on $V$}} ~ \min_{I \in \IS{G}} ~ \Paren{\Pr_{v \sim \mu} \paren{v \in I}}^{-1}.
	\]
\end{proposition}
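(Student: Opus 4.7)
The plan is to prove this via linear programming duality applied to \eqref{eq:chif}. First, I will write down the dual LP: associating a variable $y_v \geq 0$ to each vertex $v \in V$ (from the vertex constraint in the primal), the dual becomes
\[
\max_{y \in \IR^{V}_{\geq 0}} \sum_{v \in V} y_v \quad \text{subject to} \quad \sum_{v \in I} y_v \leq 1 \quad \forall I \in \IS{G}.
\]
Both the primal and dual are feasible (in the primal, setting $x_{\{v\}}=1$ for every singleton works since singletons are independent; in the dual, $y_v = 1/\card{V}$ works since $\sum_{v \in I} y_v \leq 1$ trivially). Hence strong LP duality yields $\chi_f(G) = \max \sum_v y_v$ over dual-feasible $y$.

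Next, I will reparametrize the dual in terms of distributions. Since $\chi_f(G) \geq 1$ whenever $V \neq \emptyset$, any optimal dual $y$ has $s := \sum_v y_v > 0$, so I can set $\mu_v := y_v / s$, obtaining a probability distribution on $V$. The dual constraints translate to
\[
\Pr_{v \sim \mu}(v \in I) = \sum_{v \in I} \mu_v \leq \frac{1}{s} \quad \forall I \in \IS{G},
\]
which is equivalent to $s \leq \paren{\max_{I \in \IS{G}} \Pr_{v \sim \mu}(v \in I)}^{-1} = \min_{I \in \IS{G}} \paren{\Pr_{v \sim \mu}(v \in I)}^{-1}$. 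Conversely, given any distribution $\mu$ on $V$, scaling by this minimum gives a feasible dual solution achieving $s = \min_{I} \paren{\Pr_{v \sim \mu}(v \in I)}^{-1}$. Taking the maximum over $y$ on one side and $\mu$ on the other yields exactly the claimed identity.

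There is no real obstacle here: the argument is essentially a normalization trick on top of standard LP duality. The only minor care needed is to verify that the primal LP admits an optimal solution (so that strong duality actually holds) and to rule out the degenerate case $y = 0$, both of which follow from the trivial feasibility bounds $1 \leq \chi_f(G) \leq \card{V}$.
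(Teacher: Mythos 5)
Your proof is correct and follows essentially the same route as the paper's: write down the dual LP in the form $\max \sum_v y_v$ subject to $\sum_{v \in I} y_v \le 1$ for all $I \in \IS{G}$, then observe that normalizing a dual solution to a probability distribution (and conversely scaling a distribution by $\min_I \Pr(v\in I)^{-1}$) gives a bijection between dual-feasible points and distributions achieving the claimed objective value. The only slight difference is that you are more careful than the paper about invoking strong duality (checking feasibility of both programs and ruling out the degenerate $y=0$ case), which the paper leaves implicit.
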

\Cref{prop:chif-distribution} provides us with a tool to lower bound $\chi_f$ by finding a suitable distribution on the vertices so that no independent set has a significant probability of being by the distribution. 


\section{A Lower Bound for MMPHF via Fractional Colorings}\label{sec:lower-setup}

We can now formally state the main theorem of this paper. 

\begin{theorem}[Formalization of~\Cref{res:main}]\label{thm:main}
	For any $n,u \in \IN^+$ such that $n \cdot 2^{\sqrt{\log{n}}} \leq u \leq 2^{n^{n^2+n}}$, 
	and for any MMPHF algorithm $\DS(n,u)$, 
	\[
	d(n,u) = \Omega(n\log\log\log u).
	\]
\end{theorem}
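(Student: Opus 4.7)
The strategy is to lower bound the fractional chromatic number of the \emph{conflict graph} $G_{n,u}$ on vertex set $\mathcal{S}_{n,u}$, whose edges join two $n$-subsets of $[u]$ whenever they share some element at which their ranks differ. The first step converts the data-structure lower bound into this combinatorial one. For each fixed realization $r$ of the randomness, $S \mapsto \DS^r(S)$ is a proper coloring of $G_{n,u}$, since if two adjacent subsets produced the same index, the index could not correctly resolve their conflicting shared element. Let $\mu$ be the distribution witnessing the maximum in \Cref{prop:chif-distribution}; then for every $r$, the random variable $\DS^r(S)$ with $S \sim \mu$ places probability at most $1/\chi_f(G_{n,u})$ on each outcome, so its min-entropy (and hence Shannon entropy) is at least $\log \chi_f(G_{n,u})$. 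Kraft's inequality applied for each $r$ and then averaged gives
\[ d(n,u) \;\geq\; \Exp_{S\sim \mu,\, r}\bracket{\size{S}{r}} \;\geq\; \log \chi_f(G_{n,u}), \]
so it suffices to prove $\log \chi_f(G_{n,u}) = \Omega(n \log\log\log u)$.

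Second, I would establish a \emph{base case} at a doubly-exponential universe: for appropriate $n_0$ and $u_0 \leq 2^{2^{\poly(n_0)}}$, show $\log \chi_f(G_{n_0,u_0}) = \Omega(n_0 \log n_0)$ by exhibiting a distribution $\mu_0$ on $\mathcal{S}_{n_0,u_0}$ with $\Pr_{S \sim \mu_0}[S \in \mathcal{I}] \leq 2^{-\Omega(n_0 \log n_0)}$ for every independent set $\mathcal{I}$. Recall that an independent set is a family of $n_0$-subsets in which any two members that intersect agree on the ranks of their common elements. The classical Erd\H{o}s--Hajnal shift-graph sits inside $G_{n_0,u_0}$ and has chromatic number roughly $\log^{(n_0-1)}(u_0)$, but its \emph{fractional} chromatic number is only $O(1)$, so any successful distribution $\mu_0$ must exploit far more edges than just shift-edges. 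I would build $\mu_0$ from a hierarchical random construction: partition $[u_0]$ into a nested tower of blocks, sample the $i$-th smallest element of $S$ from an appropriate scale in this tower, and argue level by level that each newly sampled element leaks $\Omega(\log n_0)$ bits of ``surprise'' against any rank-consistent family, even after conditioning on earlier levels. Carrying out this per-level analysis and composing the bounds---this is the main obstacle of the paper---yields the required $2^{-\Omega(n_0 \log n_0)}$ mass bound.

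Third, I would extend the base case to the full parameter range via the direct-product property of \Cref{prop:chif-product}. Given $(n,u)$ in the allowed range, choose $n_0 = \Theta(\log\log\log u)$ and $k = n/n_0$, and set $u_0 = u/k$; the upper constraint $u \leq 2^{n^{n^2+n}}$ guarantees $n_0 \leq n$ and $u_0 \geq 2^{2^{\poly(n_0)}}$, while $u \geq n \cdot 2^{\sqrt{\log n}}$ ensures $n_0 \geq 1$. Partition $[u]$ into $k$ consecutive blocks of size $u_0$ and restrict to inputs placing exactly $n_0$ elements in each block. Because the rank in any such $S$ of an element of the $j$-th block is $(j-1) n_0$ plus its local rank within the block, two such inputs conflict in $G_{n,u}$ if and only if they conflict inside at least one block. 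Hence the induced subgraph is precisely the $k$-fold product $G_{n_0,u_0}^{\vee k}$, and combining \Cref{prop:chif-product} with the base case yields
\[ \chi_f(G_{n,u}) \;\geq\; \chi_f(G_{n_0,u_0})^{k} \;\geq\; 2^{\Omega(k \, n_0 \log n_0)} \;=\; 2^{\Omega(n \log\log\log u)}, \]
as desired.
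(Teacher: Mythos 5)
Your overall architecture matches the paper's: reduce the data-structure bound to a lower bound on $\chi_f$ of the conflict graph via \Cref{prop:chif-distribution}, prove a base case at a doubly-exponential universe, and then use the product property \Cref{prop:chif-product} to cover small universes. However, there are two genuine problems.

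First, and most importantly, your base case is not a proof but a description of what a proof would need to do. You correctly identify that the shift-graph has $\chi_f = O(1)$ and that the distribution must therefore ``exploit far more edges,'' and your sketch (nested blocks at exponentially separated scales, per-level ``surprise'' arguments) is in the right spirit; but the central claim $\Pr_{S\sim\mu_0}[S\in\mathcal I]\le 2^{-\Omega(n_0\log n_0)}$ for every rank-consistent family $\mathcal I$ is exactly \Cref{lem:main}, and this is where essentially all of the paper's technical work lives (the window-tree, the dense/sparse pruning, the three-case analysis of \Cref{clm:case1}--\ref{clm:case3}, and the potential argument that only $m/2$ iterations can be ``dense''). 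As written, your argument assumes the theorem's hardest ingredient.

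Second, your parameter choice in step~3 is off by one logarithm. You set $n_0=\Theta(\log\log\log u)$ and $k=n/n_0$, and then claim $2^{\Omega(k\,n_0\log n_0)} = 2^{\Omega(n\log\log\log u)}$. But $k\,n_0\log n_0 = n\log n_0 = \Theta(n\log\log\log\log u)$, which is weaker than the target by a $\log$ factor. What you need is $\log n_0 = \Theta(\log\log\log u)$, i.e., $n_0 = (\log\log u)^{\Theta(1)}$; the paper takes $m = (\log\log u)^{1/6}$ precisely so that $\log m = \Theta(\log\log\log u)$ while the universe constraint $k\cdot 2^{m^{m^2+m}}\le u$ still holds (since $m^2\log m \ll \log\log u$). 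With your $n_0$, the conclusion at the upper boundary $u = 2^{n^{n^2+n}}$ would be $\Omega(n\log\log n)$ rather than the required $\Omega(n\log n)$.

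One smaller point: the inequality $\Exp_{S\sim\mu}[\size{S}{r}] \ge \log\chi_f$ does not follow from Kraft's inequality, because the index $\DS^r(S)$ is an arbitrary binary string, not a prefix-free codeword. The min-entropy argument is fine, but converting min-entropy to expected length costs roughly a factor of $2$ plus an additive constant; this is exactly what the paper's Markov-plus-counting argument in \Cref{lem:conflict-graph} yields ($d \ge (\log\chi_f - 2)/2$). This only affects constants and so does not change the $\Omega(\cdot)$ conclusion, but the equality you wrote is not correct as stated.

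Your observation that restricting to inputs with exactly $n_0$ elements per block makes the induced subgraph literally equal to the $k$-fold product $G_{n_0,u_0}^{\vee k}$ is a nice way to phrase what the paper does with $G^{\oplus k}(m)$, and that part of the argument is correct.
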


The rest of the paper presents the proof of Theorem \ref{thm:main}. We spend the rest of the section reframing the theorem in terms of the fractional chromatic number of a certain graph associated with MMPHF problem---we will then show how to lower bound the fractional chromatic number in the next section.

\subsection{Conflict Graph and its Fractional Chromatic Number}\label{sec:conflict-graph} 

Let $m \geq 1$ be an integer and define $M:= 2^{m^{m^2+m}}$. Define the graph $G(m) := (V(m),E(m))$ as: 
\begin{itemize}
	\item The vertex set is $V(m) = \mathcal{S}_{m,M}$, that is, the size-$m$ subsets of  $[M]$. We denote each vertex $v \in V(M)$ by the $m$-tuple $v:= (v_1,\ldots,v_m)$ where $0 < v_1 < v_2 < \cdots < v_m \le M$. 
	\item The edge set $E(m)$ is defined as follows. Let $v=(v_1,\ldots,v_m)$ and $w=(w_1,\ldots,w_m)$ be any two vertices in $V(M)$. Then, there is an edge $(v,w) \in G(m)$ iff 
	there exists some pair of indexes $i \neq j \in [m]$ such that $v_i = w_j$. 
\end{itemize}
We refer to $G(m)$ as the \textbf{conflict graph} of $m$.  
The following lemma clarifies our interest in this graph by showing that
fractional chromatic number of $G(m)$ can be used to lower bound size of any MMPHF (for certain parameters of input). 

\begin{lemma}\label{lem:conflict-graph}
	Let $m \geq 1$ be an integer and let $M = 2^{m^{m^2+m}}$. Consider any MMPHF $\DS(m,M)$.  Then 
	\[
     d(m,M) \geq (\log{\chi_f(G(m))}-2)/2.
     \]
\end{lemma}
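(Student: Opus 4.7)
The plan is to convert the lower bound on $\chi_f(G(m))$ into a lower bound on $d(m,M)$ in two steps: first derandomize via averaging, then apply a counting argument to a particular deterministic realization of $\DS$. The central observation that drives the whole argument is that for \emph{every} fixed random string $r$, the map $f_r : S \mapsto \DS^r(S)$ must be a proper coloring of $G(m)$. Indeed, if $(S,T) \in E(m)$ then there exist indices $i \neq j$ with $S_i = T_j$; if $f_r(S) = f_r(T)$ then the data structure would be forced to return both $i$ and $j$ in response to the positive query $\RankOp(S_i)$, contradicting correctness (which the algorithm must satisfy on every $r$). Hence every preimage of $f_r$ is an independent set of $G(m)$.

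Next I would invoke~\Cref{prop:chif-distribution} to fix a distribution $\mu$ on $V(m)$ with the guarantee that $\mu(I) \leq 1/\chi_f(G(m))$ for every $I \in \IS{G(m)}$. Because $\Exp_r[\size{S}{r}] \leq d := d(m,M)$ for every $S$, averaging under $\mu$ and swapping the two expectations gives $\Exp_r\bigl[\Exp_{S \sim \mu}[\size{S}{r}]\bigr] \leq d$, so one can fix a particular string $r^*$ for which $\Exp_{S \sim \mu}[\size{S}{r^*}] \leq d$. From this point forward the argument is entirely deterministic, operating on the single coloring $f := f_{r^*}$.

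To finish, I would use a short Markov-plus-counting step. By Markov's inequality at least half of the $\mu$-mass is on vertices $S$ with $\size{S}{r^*} \leq 2d$, so on those vertices $f(S)$ lies among the fewer than $2^{2d+1}$ binary strings of length at most $2d$. Each such string's preimage is an independent set, hence has $\mu$-mass at most $1/\chi_f(G(m))$, and summing over the at most $2^{2d+1}$ strings yields $1/2 \leq 2^{2d+1}/\chi_f(G(m))$, which rearranges to the desired bound $d \geq (\log \chi_f(G(m)) - 2)/2$. The main place any real subtlety appears is the interaction between the randomness and the coloring: because the algorithm is not charged for storing $r$, a naive chromatic-number bound is not directly applicable, and the essential trick is that using the \emph{fractional} chromatic number makes the bound $\mu(I) \leq 1/\chi_f(G(m))$ uniform across all independent sets, so it continues to bite even though a single deterministic realization is free to vary the sizes of its color classes across inputs.
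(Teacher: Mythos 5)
Your proposal is correct and follows essentially the same route as the paper: fix the hard distribution $\mu$ from \Cref{prop:chif-distribution}, average over $r$ to fix a good $r^*$, apply Markov to restrict to a half-mass set of short encodings, and count the available color classes. The only cosmetic difference is that you sum the bound $\mu(I) \le 1/\chi_f(G(m))$ over all at most $2^{2d+1}$ color classes, whereas the paper pigeonholes to find one heavy class; the two are equivalent and yield the same constant.
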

\begin{proof}
	Consider any two vertices $ v,w \in G(m)$. If there is an edge between $v$ and $w$, then there exists an element $z = v_i = w_j, i\not= j$.  Therefore for every choice of randomness $r$, $\DS^r(v) \not= \DS^r(w)$, because query $z$ must return $i$ on $\DS^r(v)$ and $j$ on $\DS^r(w)$. This implies that
	for every $r$, the set of vertices $v$ with the same $\DS^r(v)$ form an independent set in $G(m)$ (and the collection of these sets
	is a coloring of $G(m)$). We use $\mathcal{I}^r$ to denote these independent sets in $G(m)$ for this choice of $r$. 

	On the other hand, by~\Cref{prop:chif-distribution}, there exists a distribution $\mu$ on $V(m)$ such that
	\begin{align}
	    \chi_f(G(m)) = \min_{I \in \IS{G(m)}} \quad \Paren{\Pr_{v \sim \mu}\paren{v \in I}}^{-1}.   \label{eq:vm-chif}
	\end{align}
	Let us fix that distribution. Under this distribution, by the definition of $d$,
	\begin{align*}
	  d = d(m,M) = \max_{v \in V(m)} \Exp_r[d^r(v)] \geq \Exp_{v \sim \mu} \Exp_{r} \bracket{\size{v}{r}} = \Exp_{r} \Exp_{v \sim \mu} \bracket{\size{v}{r}}. 
	\end{align*}
    An averaging argument now implies that there exists a choice $r^*$ of random bits 
    such that 
    \[
        \Exp_{v \sim \mu}\bracket{\size{v}{r^*}} \leq d.
    \]
	By Markov's inequality, with probability at least $1/2$, for $v \sim \mu$, we have that $\size{v}{r^*} \leq 2d$. 
	
	Recall that $\DS^{r^*}(v)$ corresponds to an independent set in $\mathcal{I}_{r^*}$. Moreover, there can be at most $2^{2d+1}-2$ independent sets $I$ in $\mathcal{I}_{r^*}$ such that for all $v \in I$, $\size{v}{r^*} \leq 2d$; this is because there are at most $2^{2d+1}-2$ choices for $\DS^{r^*}(v)$ across
	all $v \in V(m)$ that can use up to $2d$ bits in their index (as the number of non-empty binary strings of length at most $2d$ is $2^{2d+1}-2$). 
	Since a random $v \sim \mu$ belongs to one of these $2^{2d+1}-2$ independent sets with probability at least half, we necessarily have some 
	independent set $I \in \mathcal{I}_{r^*}$ where
	\[
	    \Pr_{v \sim \mu}\paren{v \in I} \geq \frac{1}{2 \cdot (2^{2d+1}-2)} \geq \frac{1}{2^{2d+2}}. 
	\]
	Plugging in this bound in~\Cref{eq:vm-chif}, we have, 
	\[
	    \chi_f(G(m)) \leq 2^{2d+2},
	\]
	which implies that $d \geq (\log{\chi_f(G(m)}-2)/2$, concluding the proof.  
\end{proof}

\Cref{lem:conflict-graph} reduces our task of proving~\Cref{thm:main} to establishing a lower bound on $\chi(G(m))$. This will be accomplished by the following lemma, which we prove in \Cref{sec:distribution}.

\begin{lemma}\label{lem:cg-chif}
	There is an absolute constant $\eta > 0$ such that for every sufficiently large $m \geq 1$, 
	\[
		\chi_f(G(m)) \geq m^{\eta \cdot m}. 
	\]
\end{lemma}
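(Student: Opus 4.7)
The plan is to invoke \Cref{prop:chif-distribution} and exhibit a distribution $\mu$ on the vertex set $V(m)$ such that
\[
    \Pr_{v \sim \mu}\paren{v \in I} \le m^{-\eta m} \quad \text{for every independent set } I \in \IS{G(m)},
\]
which immediately yields $\chi_f(G(m)) \ge m^{\eta m}$ as required.

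The first step is to give a clean combinatorial description of the independent sets of $G(m)$. Two tuples $v, w \in V(m)$ are adjacent in $G(m)$ precisely when some element $z$ appears in $v$ at rank $i$ and in $w$ at some rank $j \neq i$. Therefore, within any independent set $I$, every element of $[M]$ that occurs in some $v \in I$ must occur at a single consistent rank. Hence $I$ is contained in a ``rectangle''
\[
    \cF(A_1, \ldots, A_m) := \set{v \in V(m) : v_k \in A_k \text{ for all } k \in [m]},
\]
for some pairwise disjoint sets $A_1, \ldots, A_m \subseteq [M]$, and it suffices to show that for every such tuple,
\[
    \Pr_{v \sim \mu}\paren{v_k \in A_k \text{ for all } k \in [m]} \le m^{-\eta m}.
\]

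Next, I would design $\mu$ to exploit the iterated-exponential size $M = 2^{m^{m^2+m}}$, which leaves enough room for a deeply recursive hierarchy on $[M]$. My plan is to identify $[M]$ with the leaves of a rooted tree of branching factor polynomial in $m$, and to sample $v$ by first selecting a random ``pivot'' internal node, then drawing each of $v_1, \ldots, v_m$ from a distinct sub-branch of the pivot (placed in the canonical order so that $v_1 < \cdots < v_m$), and finally recursing inside each sub-branch to randomize every coordinate at many further scales. The guiding property I aim for is that, conditioned on any prefix $v_{<k}$, the adversary's constraint $v_k \in A_k$ is forced to ``align'' with the sampling pattern at level $k$; any partition $(A_1, \ldots, A_m)$ that tries to efficiently cover the support of $\mu$ must therefore commit to a specific sub-branch at each of $\Omega(m)$ levels of the hierarchy for each rank, losing a factor of $m^{\Omega(1)}$ of probability at each such level.

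The main obstacle, and where the bulk of the technical work will live, is the design and analysis of $\mu$. A naive uniform distribution on $V(m)$ only gives $\chi_f(G(m)) \ge e^{\Omega(m)}$: the adversary can partition $[M]$ into $m$ equal consecutive blocks, placing block $k$ into $A_k$, catching an $e^{-\Theta(m)}$ fraction of random tuples. To break this barrier, the distribution must be strongly correlated across coordinates, and the analysis must rule out adversarial partitions that mix across many scales of the hierarchy. I would establish the probability bound by induction on the depth of the hierarchy: at each depth, an averaging/pigeonhole argument shows that no matter how $(A_1, \ldots, A_m)$ is chosen, the probability that the random sub-branches of the pivot are consistent with the partition drops by a factor of $m^{-\Omega(1)}$; iterating this over the $\Omega(m)$ effective levels of the hierarchy yields the desired $m^{-\eta m}$ bound, which combined with \Cref{prop:chif-distribution} completes the proof.
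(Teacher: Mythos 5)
Your reduction step is sound: invoking \Cref{prop:chif-distribution} and characterizing the (maximal) independent sets of $G(m)$ by disjoint ``rank classes'' $A_1,\ldots,A_m \subseteq [M]$ is exactly the paper's \Cref{obs:mis-array}, with $A_i = f_I^{-1}(i)$. You also correctly observe (with the right back-of-envelope calculation) that the uniform distribution on $V(m)$ only gives $e^{\Theta(m)}$, so a strongly correlated distribution is needed.

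The gap is that everything after that point is a plan, not a proof. You never specify the distribution $\mu$, and you never carry out the claimed ``averaging/pigeonhole argument at each depth.'' The sentence ``no matter how $(A_1,\ldots,A_m)$ is chosen, the probability \dots drops by a factor of $m^{-\Omega(1)}$; iterating this over the $\Omega(m)$ effective levels'' is precisely the crux of the entire lower bound, and stating that it should work is not the same as showing it does. Concretely, the difficulty your sketch does not engage with is that the adversary chooses $f$ (equivalently the $A_k$'s) \emph{after seeing the distribution}, so the adversary's labeling can align with your tree's canonical structure; one has to show that for \emph{every} partition, some $\Omega(m)$ of the ranks incur a genuine $1/m$-type loss, and that these losses are independent enough to multiply. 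The paper's resolution is a nontrivial potential/charging argument: it uses a sequential nested-window construction (each $X_{i+1}$ lies immediately after $X_i$ in a much smaller window), recursively partitions each window into ``dense'' and ``sparse'' parts with respect to rank $i$, and establishes that whenever $X_i$ lands in a dense interval, the final window $\Win_m$ is (w.h.p.) permanently committed to a $2/m$ fraction of $f$-preimages of $i$ --- so at most $m/2$ ranks can ever be ``dense'' and the other $m/2$ each lose an $O(1/m)$ factor. Your proposed construction (a random pivot with $v_1,\ldots,v_m$ in \emph{distinct} sub-branches of that pivot) is structurally quite different from this nested chain, and you offer no analogue of the dense/sparse charging scheme; without one, it is not clear that the $\Omega(m)$ levels of loss can actually be extracted, and the proposal does not establish \Cref{lem:cg-chif}.
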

By plugging in the lower bound of $\chi_f(G(m))$ from~\Cref{lem:cg-chif} inside~\Cref{lem:conflict-graph}, 
we get that for any sufficiently large $n \geq 1$ and universe size $u=2^{m^{m^2+m}}$, 
the lower bound on the MMPHF problem is $\Omega(n\log{n}) = \Omega(n\log\log\log{u})$ as $\log{n} = \Theta(\log\log\log{u})$ here. 

Thus Lemmas \ref{lem:cg-chif} and \ref{lem:conflict-graph} can be combined to prove~\Cref{thm:main} modulo a serious caveat: the lower bound  only holds for instances of the problem wherein the universe size 
is larger than doubly exponential in $n$, which is admittedly not the most interesting setting of the parameters. 
In the next subsection, we use a simple graph product argument (plus \Cref{prop:chif-product}) to extend this lower bound to the whole range of parameters $u$ considered by~\Cref{thm:main}. 

\subsection{Extending the MMPHF Lower Bound to Small Universes}\label{sec:small-universe}

For every pair of integers $m, \ell \geq 1$, define $G(m,\ell) = (V(m,\ell), E(m,\ell))$ as the \textbf{$\ell$-offset conflict graph} where the vertex set $V(m,\ell)$ is the set of all size-$m$ subsets of $[\ell+1, M+\ell]$, and the edge set $E(m,\ell)$ is defined as in normal conflict graphs. (Thus $G(m,0) = G(m)$.)

Furthermore, for every integer $m,k \geq 1$, we define the \textbf{$k$-fold conflict graph}, denoted by $G^{\oplus k}(m)$, as the graph: 
\[
	G^{\oplus k}(m) = (V^{\oplus k}(m), E^{\oplus k}(m)) := G(m,0) \vee G(m,M) \vee  G(m,2M) \vee \cdots \vee G(m,(k-1)M),
\]
where `$\vee$' denotes the graph product in~\Cref{prop:chif-product}.  The direct interpretation of the nodes of $V^{\oplus k}(m)$ is a product of tuples from disjoint ranges, but we can also interpret it as a single tuple of length $k \cdot m$. This way, $G^{\oplus k}(m)$ is a subset of the conflict graph on $km$-size subsets of $[k \cdot M]$ and
it makes sense to compute $\DS(v)$ for any $v\in V^{\oplus k}(m)$.

Therefore, by~\Cref{lem:conflict-graph}, we again have a lower bound of $\Omega(\log{\chi_f(G^{\oplus k}(m))})$ for MMPHF on tuples of length $n=km$ from a universe of size $u = kM$. 

By~\Cref{prop:chif-product}, combined with~\Cref{lem:cg-chif}, we have, 
\begin{align*}
 \log{\chi_f(G^{\oplus k}(m))} = k \cdot \log{\chi_f(G(m))} \geq \Omega(k\cdot m \cdot \log{m}) = \Omega(n\log{m}).
\end{align*}

Consider a choice of
\[
m=({\log\log{n}})^{1/6} ~~\text{and}~~ k=n/(\log\log{n})^{1/6},
\]
which in turn gives us 
\[
    u = k \cdot 2^{m^{m^2+m}} \ll k \cdot 2^{2^{m^3}} = \frac{n}{(\log\log{n})^{1/6}} \cdot 2^{2^{{\sqrt{\log\log{n}}}}} \ll n \cdot 2^{\sqrt{\log{n}}}.
\]
By the above equation, 
we have a lower bound of $\Omega(n\log\log\log{u})$ for MMPHF
given that in this case, $\log{m} = \Theta(\log\log\log{u})$. Thus, so far, we have proven~\Cref{thm:main} on both its boundary cases, namely, when 
$u = n \cdot 2^{\sqrt{\log{n}}}$ and when $u = 2^{n^{n^2+n}}$. The proof
can now be extended to the full range of the parameters in the middle by re-parameterizing $k$ appropriately; see \Cref{sec:allu} for the complete argument.

We conclude that in order to finish the proof of~\Cref{thm:main}, we need only establish~\Cref{lem:cg-chif}.


\section{Fractional Chromatic Number of Conflict Graphs}\label{sec:distribution}

In this section, we establish a lower bound on the fractional chromatic number of the conflict graph $G(m)$ for any (large enough) $m \geq 1$, and we thereby prove~\Cref{lem:cg-chif}. 

\Cref{prop:chif-distribution} gives us a clear path for proving the lower bound on $\chi_f(G(m))$ given by~\Cref{lem:cg-chif}: we can design a distribution $\mu$ on vertices of $V(m)$ and then, for every independent set $I \in \IS{G(m)}$, we can upper bound the probability that $v$ sampled from $\mu$ belongs to $I$. As $\chi_f$ in~\Cref{prop:chif-distribution} is maximum over all possible distributions, our distribution provides a lower bound for $\chi_f(G(m))$. 

To continue, we need the following  interpretation of the (maximal) independent sets in $G(m)$. 

\begin{observation}\label{obs:mis-array}
	Any maximal independent set $I$ in $G(m)$ can be uniquely identified by a function $f_I : [M] \rightarrow [m]$ such that for every vertex $v=(v_1,\ldots,v_m) \in I$, $f_I(v_i) = i$. 
\end{observation}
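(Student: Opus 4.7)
The plan is to directly construct the identifying function $f_I$ from $I$ and then use maximality to conclude that $I$ can be recovered from $f_I$ via the natural formula.

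First, I would define $f_I : [M] \to [m]$ as follows. For each $z \in [M]$ that appears as the $i$-th coordinate of some vertex $v = (v_1, \ldots, v_m) \in I$, set $f_I(z) := i$; and for each $z \in [M]$ that does not appear in any vertex of $I$, set $f_I(z) := 1$ (any fixed value in $[m]$ works here). The crucial point is that the first rule is unambiguous: if the same element were to occur at two different positions in two vertices of $I$, namely $z = v_i = w_j$ with $i \neq j$, then by the definition of $E(m)$ the pair $(v, w)$ would be an edge of $G(m)$, contradicting the independence of $I$. By construction, $f_I$ then satisfies $f_I(v_i) = i$ for every $v \in I$ and every $i \in [m]$, which is exactly the property stated in the observation.

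Next, I would verify that $I$ can be recovered from $f_I$ as
\[
I^{*} := \{\, v = (v_1, \ldots, v_m) \in V(m) : f_I(v_i) = i \text{ for every } i \in [m]\,\}.
\]
The inclusion $I \subseteq I^{*}$ is immediate from the defining property of $f_I$. Moreover, $I^{*}$ is itself an independent set in $G(m)$: if $v, w \in I^{*}$ share an element $v_i = w_j$, then $i = f_I(v_i) = f_I(w_j) = j$, so no edge of $G(m)$ is formed. Since $I$ is maximal and contained in the independent set $I^{*}$, we must have $I = I^{*}$. Hence $I$ is uniquely determined by $f_I$ through the displayed formula.

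I do not anticipate any real obstacle here; the observation is essentially a restatement of the independence condition combined with maximality. The only minor subtlety is that $f_I$ is not literally unique on elements of $[M]$ that never appear in a vertex of $I$, but this is harmless: maximality forces that no tuple $v \notin I$ can satisfy $f_I(v_i) = i$ for all $i$, regardless of how $f_I$ is extended to those unused elements, so every valid choice recovers the same set $I^{*} = I$.
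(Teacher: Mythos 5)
Your proof is correct and follows essentially the same approach as the paper: use independence to show each element appears at a unique index across $I$, define $f_I$ accordingly, then define $I^*$ (the paper calls it $I'$) and invoke maximality to conclude $I = I^*$. Your final remark about $f_I$ being non-unique on unused elements is a bit more explicit than the paper, but it is a harmless bookkeeping point that doesn't change the argument.
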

\begin{proof}
	Consider any two vertices $v,w \in I$. Since there is no edge between $v=(v_1,\ldots,v_m)$ and $w=(w_1,\ldots,w_m)$ in $G(m)$, whenever $v_i = w_j$, we necessarily have that $i = j$. Thus, any element of $e \in [M]$ can only 
	appear in a single index $i_e \in [m]$ throughout all vertices $v \in I$ (or does not appear at all in $v$). We can thus define $f_I(e)$ to be $i_e$, giving us a functino $f_I$ with the desired property. 
	
	We now show that $f_I$ uniquely identifies $I$. If we define $I'$ to be the set of vertices $v = (v_1,\ldots,v_m) \in I$ satisfying $f_I(v_i) = i$ for all $i$, then $I'$ is an independent set satisfying $I \subseteq I'$. Since $I$ is assumed to be maximal, it follows that $I = I'$, meaning that we can recover $I$ from $f_I$.  
\end{proof}

\Cref{obs:mis-array} allows us to reduce \Cref{lem:cg-chif} to the following lemma about $m$-tuples of increasing integers. Proving Lemma \ref{lem:main} is the main technical contribution of our work. 

\begin{lemma}\label{lem:main}
	There is an absolute constant $\eta > 0$ such that for any sufficiently large $m \geq 1$ and $M=2^{m^{m^{2}+m}}$, the following is true. 
	There exists a distribution on $m$-tuples of increasing numbers $X_1 < \cdots < X_m$ from $[M]$ such that for any function $f: [M] \rightarrow [m]$, 
	\[
		\Pr_{(X_1,\ldots,X_m)}\paren{\text{$\forall i \in [m]:$  $f(X_i) = i$}} \leq m^{-\eta \cdot m}. 
	\]
\end{lemma}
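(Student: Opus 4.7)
The plan is to construct a distribution $\mu$ on increasing $m$-tuples $(X_1,\ldots,X_m) \in [M]^m$ and then bound $\Pr_{\mu}[f(X_i)=i \text{ for all } i]$ uniformly in $f$ by a layered conditioning argument. A first useful observation is that the naive candidates for $\mu$ are all insufficient. The uniform distribution on $m$-subsets of $[M]$ only yields probability $\leq m!/m^m = e^{-\Theta(m)}$ against the adversarial $f$ that partitions $[M]$ into $m$ equal contiguous intervals $I_1,\ldots,I_m$ and sets $f(x)=i$ for all $x \in I_i$. Since we need $m^{-\eta m}$, which is exponentially smaller than $e^{-m}$ for any constant $\eta>0$, the distribution $\mu$ must be genuinely correlated across multiple scales. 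Similarly, any product-type distribution in which each $X_i$ marginally lives in a fixed interval $J_i$ can be defeated by choosing $f$ to be constant value $i$ on $J_i$.

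My proposed construction is a multi-scale hierarchical sampler. Partition $[M]$ into a tree of $L = \Theta(m)$ levels with branching factor $b = m^{\Theta(m+1)}$ per level, so that $b^L \approx M$, which is exactly what the choice $M=2^{m^{m^2+m}}$ accommodates. Sampling proceeds top-down: at each level we select one child uniformly at random to recurse into, and simultaneously ``place'' one of the coordinates $X_i$ into a uniformly random one of the non-recursed children at that level. The assignment of coordinate indices $i \in [m]$ to levels is itself randomized, e.g., via a uniform random permutation, so that $f$ cannot predict the scale at which any particular $X_i$ is placed. Because we walk along a rooted path and the leaves of the tree are ordered left-to-right by value, this procedure automatically produces a sorted $m$-tuple.

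To analyze $\Pr_{\mu}[f(X_i)=i \text{ for all } i]$, I reveal the sampled path level by level. The key claim is that at each level, conditioned on the previously revealed structure, the probability that the newly placed coordinate has the correct $f$-label is $O(1/m)$, averaged over the level's randomness. This follows from a pigeonhole argument: among the $b$ sub-blocks at the current level, the level set $f^{-1}(i)$ for the relevant coordinate can ``cover'' only $O(b/m)$ of them in the sense of containing a non-negligible fraction, so a uniformly random choice of sub-block avoids $f$ with probability $1-O(1/m)$. Multiplying the per-level bounds across $\Theta(m)$ levels gives $m^{-\Omega(m)}$, matching the desired $\eta$.

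The main obstacle is the possibility of adversarial alignment: an $f$ that cleverly correlates its level sets with $\mu$'s hierarchy at every scale, so that $f^{-1}(i)$ intersects precisely the sub-block selected by $\mu$ at each level. Ruling this out is where the randomization of the coordinate-to-level assignment is essential: it forces $f$'s $m$ level sets to be simultaneously ``spread'' across $L$ scales, and a counting/union-bound argument shows this is too much to ask. Making this precise and showing that the per-level $O(1/m)$ bound holds uniformly in $f$ (without any averaging over $f$) is the main technical challenge. It is also precisely where the large branching factor $b = m^{\Theta(m+1)}$, and hence the size $M = 2^{m^{m^2+m}}$, enters: $b$ must be large enough that after averaging over the random level assignment, the $f$-aligned fraction of sub-blocks really is $O(1/m)$, rather than merely $o(1)$.
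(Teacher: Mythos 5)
Your initial observations---that the uniform distribution only yields $e^{-\Theta(m)}$ and that a multi-scale, correlated sampler is required to reach $m^{-\eta m}$---are correct and in the spirit of the paper. But there are two concrete gaps in the proposal, and the construction and analysis diverge substantially from what the paper actually does.

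First, the construction as stated does not produce a valid sorted $m$-tuple under a uniform random permutation $\sigma$. The point placed at level $\ell$ lies in a sibling of the recursed block, while \emph{all} later placements lie strictly inside the recursed block, hence entirely on one side of it. So the level-$\ell$ point is forced to be either the smallest or the largest remaining value, which pins $\sigma$ down to one of the $2^{m-1}$ ``zigzag'' permutations rather than a uniform one. The paper avoids this entirely by using a sliding-window construction, $\Win_{i+1} = [X_i + 1: X_i + 2^{S_i}]$ with $S_i$ a random decreasing sequence, so that $X_1 < \cdots < X_m$ and the identification $X_i \mapsto i$ come for free, with no randomization of the index-to-scale assignment at all. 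Your parameter counting is also off: $b^L = m^{\Theta(m^2)}$ is far smaller than $M = 2^{m^{m^2+m}}$, and the paper's window-tree has $k+1 = m^m + 1$ levels, not $\Theta(m)$.

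Second, and more fundamentally, the per-level $O(1/m)$ bound you assert does not hold uniformly in $f$, and the pigeonhole you invoke is not the right mechanism. If the recursed block $B_{\ell-1}$ happens to lie inside a region where $f$ is constantly equal to some $i^*$, then every sub-block is entirely $f^{-1}(i^*)$ and the level-$\ell$ conditional factor is $1$ whenever $\sigma(\ell) = i^*$. Your only defense is the randomness of $\sigma$, giving an averaged bound that is not obviously preserved under chain-rule conditioning on earlier successes; you acknowledge this is ``the main technical challenge,'' but the ``counting/union-bound over scales'' you gesture at cannot work as stated, because the scales have wildly different total masses and there is no single budget to union-bound against. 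The paper resolves exactly this tension with a dense/sparse dichotomy and a \emph{potential argument on the final window} (\Cref{lem:iteration}): for each index $i$, either $X_i$ lands in a region where $f^{-1}(i)$ has density $\leq 101/m$, giving the per-iteration factor, or with high probability the final window $\Win_m$ becomes committed to carrying $f^{-1}(i)$-density $\geq 2/m$. Since these committed fractions are disjoint across $i$ and must sum to at most $1$ inside the \emph{single} fixed-scale window $\Win_m$, the latter case can occur for at most $m/2$ indices, so at least $m/2$ iterations each contribute $O(1/m)$, giving $m^{-\Omega(m)}$. This final-window potential---charging dense iterations against disjoint chunks of one terminal scale, rather than trying to union-bound over all scales---is the crux of the proof and has no analog in your sketch.
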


Before proving~\Cref{lem:main}, we show how it implies~\Cref{lem:cg-chif}. 

\begin{proof}[Proof of~\Cref{lem:cg-chif} (assuming~\Cref{lem:main})]
	Any choice of $(X_1,\ldots,X_m)$ in~\Cref{lem:main} can be mapped to a unique vertex $v \in G(m)$ and vice versa. Thus, $(X_1,\ldots,X_m)$ induces a distribution $\mu$ on vertices $V(m)$: sample $(X_1,\ldots,X_m)$ and return the vertex $v=(v_1,\ldots,v_m)$ where $v_i = X_i$ for all $i \in [m]$. Moreover, for any maximal independent set $I \in \IS{G}$, by~\Cref{obs:mis-array}, the vertex corresponding to $(X_1,\ldots,X_m)$ belongs to $I$ iff $f_I(X_i) = i$ for all $i \in [m]$. Thus, 
	\[
		\Pr_{v \sim \mu}\paren{v \in I} = \Pr_{(X_1,\ldots,X_m)}\paren{\text{$\forall i \in [m]:$  $f(X_i) = i$}} \leq m^{-\eta \cdot m}. 
	\]
	As every independent set of $G(m)$ is a subset of some maximal independent set, 
	the upper bound continues to hold for every independent set in $G(m)$. 
	
	By~\Cref{prop:chif-distribution}, 
	\[
	    \chi_f(G(m)) \geq \min_{I \in \IS{G(m)}} \Paren{\Pr\paren{v \in I}}^{-1} \geq m^{\eta \cdot m},
	\]
concluding the proof. 
\end{proof}

The rest of the section proves~\Cref{lem:main}. We start with a high-level overview in \Cref{sec:overview}. We then define the distribution that we will use for the proof of~\Cref{lem:main} (Section \ref{sec:distribution}) and analyze it to establish \Cref{lem:main} (Section \ref{sec:analysis}). The probability distribution that we construct in these sections should be viewed intuitively as a ``hard'' input distribution on inputs to the MMPHF problem (in the spirit of Yao's minimax principle).

\subsection{A High-Level Overview of the Proof}\label{sec:overview}

The proof of~\Cref{lem:main} is quite dense and requires both a highly delicate probability distribution and several intricate technical arguments. Thus, before getting into the details of this proof, we 
provide a (very) high-level overview of the logic behind it. In order to convey the intuition, we omit many details from this subsection, instead limiting ourselves to an informal discussion. 

The distribution in~\Cref{lem:main} is roughly as follows: we start with a ``window'' $\Win_1$  which is the interval $[1:M]$, and then sample $X_1$ uniformly at random from $\Win_1$. We then pick window $\Win_2$ to be $[X_1+1:X_1+w_2]$
for an integer $w_2 > 1$ chosen randomly from a carefully designed distribution. Similarly to before, $X_2$ will be chosen uniformly from $\Win_2$. We continue like this by picking a new window $\Win_i = [X_{i-1}+1:X_{i-1}+w_{i}]$ for each $i \in [m]$ by sampling each $w_i$ from a distribution that is constructed based on $(w_1,\ldots,w_{i-1})$, and then sampling $X_i$ from $\Win_i$. Note that, by design, we will satisfy $X_1 < X_2 < \ldots < X_m$.

The key property that this distribution achieves 
can be explained informally as follows. For any index $i \in [m]$, there is a recursive partitioning of the window $\Win_i$ into ``dense'' and ``sparse'' intervals, where an interval $I \subseteq \Win_i$ is dense (with respect to the function $f$ and the index $i$) if at least an $\Omega(1/m)$ fraction of entries $j \in I$ satisfy $f(j) = i$, and otherwise $I$ is sparse. The central property that our distribution ensures is that, if the random choice of $X_i$ places it in a dense interval, then (with very high probability) the \emph{final window} $\Win_m$ will itself end up being dense (i.e., for at least a $2/m$ fraction of $j \in \Win_m$, $f(j) = i$). 





Establishing this property is quite challenging and involves defining the distribution of $w_i$'s in a highly non-uniform manner (in terms of their values); this is also the source of the doubly exponential dependence of  range $M$ on the number of indices $m$. We postpone the details on how this property can be achieved to the actual proof and focus on why it is a useful property for us. 

The analysis of the distribution now uses the property in a potential-function style argument. For each $X_i$, it is either sampled from a sparse interval or a dense one. If $X_i$ is sampled from a sparse interval $I$, then no matter the past iterations, the probability that $f(X_i) = i$ is at most $(2/m)$, since at most $(2/m)$ fraction of $I$ can have value $f(j) = i$ by the definition of it being sparse.  On the other hand, if $X_i$ is chosen from a dense interval, then at least a $(2/m)$ fraction of entries of $\Win_m$ should be mapped to $i$ by $f$ as well (by our property). Seeing $\Win_m$ as a potential function now, we have that this latter step can only happen for $(m/2)$ iterations $i \in [m]$---indeed, each time that this happens for some $i$, we commit some $(2/m)$ fraction of indices $j \in \Win_m$ to having $f(j) = i$, and these sets indices must be disjoint. As a result, we have that only at least $(m/2)$ iterations $i \in [m]$ sample $X_i$ from a sparse interval. Thus, 
\begin{align*}
    \Pr(f(X_1)=1,\ldots,f(X_m)=m) &\leq \hspace{-20pt}\prod_{\substack{\text{$i$: $X_i$ chosen from} \\ \text{a sparse interval}}} \hspace{-20pt}\Pr\paren{f(X_i) = i \mid f(X_1)=1,\ldots,f(X_{i-1})=(i-1)} \\
    & \leq O\left(\frac{1}{m}\right)^{m/2} = m^{-\Omega(m)},
\end{align*}
as desired for the proof of~\Cref{lem:main}. 

The main challenge in formalizing the above argument is the design and analysis of the distribution so that the property discussed above holds. Note also that the property cannot hold deterministically---another challenge is to show that it holds with such high probability that the risk of the property ever failing  (across the entire construction) can be ignored.


\subsection{The Hard Input Distribution in~\Cref{lem:main}} 

The distribution is defined as follows. 

\begin{tbox}
	\begin{center}
		\underline{The distribution in~\Cref{lem:main}:}
	\end{center}
	\begin{enumerate}[label=$(\roman*)$]
		\item Let $k = m^{m}$, $S_0 = k^{m+1}$, and $X_0 = 0$. 
		\item For $i=1$ to $m$: 
		\begin{enumerate}
			\item Sample two random numbers $Y_i$ from $[2^{S_{i-1}}]$ and $Z_i$ from $[k-1]$ uniformly at random. 
			\item Define the random variables of iteration $i$ as:  
			\[
				X_i = X_{i-1} + Y_i \qquad \text{and} \qquad S_i = S_{i-1} - k^{m-i+1} \cdot Z_i. 
			\]
		\end{enumerate}
		\item Return $(X_1,\ldots,X_m)$ as the resulting random variables. 
	\end{enumerate}
	
\end{tbox}

 To avoid ambiguity, we use lower case letters $(s_i,x_i,y_i,z_i)$ to denote realizations of random variables $(S_i,X_i,Y_i,Z_i)$ for $i \in [m]$. 

We have the following basic observation on the range of numbers created in this distribution. 
\begin{observation}\label{obs:dist}
	Every choice of $(X_1,\ldots,X_m)$ and  $(S_1,\ldots,S_m)$  satisfy the following properties: 
	\begin{enumerate}[label=$(\roman*)$]
		\item \textnormal{Monotonicity:} for all $i \in [m]$, 
		$X_i > X_{i-1}$ and $S_i \leq S_{i-1} - m^{m}$ (and $S_i,X_i$ are integers). 
		\item \textnormal{Boundedness:} for every $i \in [m]$, $X_m \leq X_i + (m-i) \cdot 2^{S_i}$ and $S_{m} \geq S_i - k^{m-i+1} \geq 0$. 
	\end{enumerate}
\end{observation}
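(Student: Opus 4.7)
The observation decomposes into four elementary claims that each follow by unwinding the recursive definitions of $(X_i, S_i)$ given in the distribution. The plan is to handle monotonicity first, treating $X_i$ and $S_i$ separately, and then use those monotonicity bounds to telescope for the boundedness part. Integrality is immediate throughout, since $Y_i, Z_i, k, S_0, X_0$ are all integers and the recursions only add or subtract integer multiples.

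For monotonicity, I would observe that $Y_i \in [2^{S_{i-1}}]$ forces $Y_i \geq 1$, so $X_i = X_{i-1} + Y_i \geq X_{i-1} + 1 > X_{i-1}$. For the $S_i$ bound, $Z_i \in [k-1]$ gives $Z_i \geq 1$, hence $S_i \leq S_{i-1} - k^{m-i+1}$; since $m - i + 1 \geq 1$ for $i \in [m]$ and $k = m^m$, this yields $S_i \leq S_{i-1} - k \leq S_{i-1} - m^m$. (This also implicitly uses that $S_{i-1} \geq S_{i-1}$ is never negated by a larger decrement, which is where the boundedness part of the observation will later become important.)

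For boundedness, I would iterate the $X$-recursion to write $X_m = X_i + \sum_{j=i+1}^{m} Y_j$. Each $Y_j \leq 2^{S_{j-1}}$, and by the monotonicity just proved, $S_{j-1} \leq S_i$ for $j \geq i+1$, so $Y_j \leq 2^{S_i}$ and the sum is bounded by $(m-i) \cdot 2^{S_i}$. For the $S$-recursion, I would use $Z_j \leq k - 1$ to get
\[
S_i - S_m \;=\; \sum_{j=i+1}^{m} k^{m-j+1} Z_j \;\leq\; (k-1)\sum_{j=i+1}^{m} k^{m-j+1} \;=\; (k-1) \cdot \frac{k^{m-i+1} - k}{k-1} \;=\; k^{m-i+1} - k,
\]
where the geometric sum telescopes. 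Hence $S_m \geq S_i - k^{m-i+1} + k \geq S_i - k^{m-i+1}$, giving the first inequality. For the non-negativity $S_i - k^{m-i+1} \geq 0$, I would apply the same telescoping bound with endpoints $0$ and $i$, obtaining $S_i \geq S_0 - (k^{m+1} - k^{m-i+1}) = k^{m-i+1}$, since $S_0 = k^{m+1}$ by definition.

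I do not anticipate any real obstacle: the proof is a matter of carefully matching indices in the geometric/telescoping sum and checking that the choice $k = m^m$ (together with $S_0 = k^{m+1}$) is exactly tuned so that the worst-case realization of the $Z_j$'s still leaves $S_m \geq 0$. The slight subtlety is ensuring that the bound $Y_j \leq 2^{S_{j-1}}$ is correctly chained through the monotonicity of $S_i$, so that the uniform bound $2^{S_i}$ applies to every $Y_j$ with $j > i$; this is handled by appealing to the already-proved part (i) before attacking part (ii).
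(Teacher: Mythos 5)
Your proof is correct and follows essentially the same route as the paper: unwind the recursions, use $Y_i \geq 1$ and $Z_i \geq 1$ for monotonicity, then telescope and bound the resulting geometric sum for boundedness. Two small (and favorable) differences: you evaluate the geometric sum $\sum_{j=i+1}^m k^{m-j+1}$ exactly, whereas the paper majorizes it by the infinite series $\sum_{j\ge i}k^{-j}$; and you explicitly establish $S_i \geq k^{m-i+1}$ for \emph{every} $i$ by running the same telescoping bound between indices $0$ and $i$, whereas the paper only spells this out for the $i=0$ endpoint ($S_m \geq S_0 - k^{m+1} \geq 0$), leaving the general case implicit.
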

\begin{proof}
	Monotonicity of $X_i$'s holds as $Y_i$'s are positive. Monotonicity for $S_i$'s  holds because $Z_i$'s are positive and $k^{m-i+1} \geq k^{m-m+1} \geq k = m^{m}$, meaning that we always have $S_{i} \leq S_{i-1}-m^{m}$.  
	
	\noindent
	For part $(ii)$, we have,
	\[
		X_m = X_i + \sum_{j=i+1}^{m} Y_j \leq X_i + \sum_{j=i+1}^{m} 2^{S_{j-1}} \leq X_i + (m-i) \cdot 2^{S_i}, 
	\]
	which proves the boundedness of $X_i$'s. For $S_i$'s, 
	\begin{align*}
		S_{m} ={S_{i} - \sum_{j=i+1}^{m} {k^{m-j+1}} \cdot Z_j} \geq S_{i} - k^{m} \cdot (k-1) \cdot \sum_{j=i}^{m-1} k^{-j}  \geq S_i - k^{m-i+1}. \tag{as $\sum_{j=i}^{m-1} k^{-j} \leq \sum_{j=i}^{\infty} k^{-j} = k^{-i+1} \cdot (k-1)^{-1}$}
	\end{align*}
	Finally, by this bound, we have $S_m \geq S_0 - k^{m+1} \geq 0$ as $S_0 = k^{m+1}$. 
\end{proof}
\noindent

When discussing $(X_1, \ldots, X_m)$, we will also need some further definitions: 

\begin{itemize}[leftmargin=15pt]
    \item 
For any realization $(s_{<i},x_{<i})$, we define the \textbf{window} of iteration $i \in [m]$, $\Win_i := \Win_i(s_{<i},x_{<i})$, as the support of the random variable $X_i$ conditioned on $(s_{<i},x_{<i})$, i.e., 
\[
	\Win_i := \Win_i(s_{<i},x_{<i}) = [x_{i-1}+1: x_{i-1} + 2^{s_{i-1}}]. 
\]
Notice that $\card{\Win_i(s_{<i},x_{<i})} = 2^{s_{i-1}}$ and $\Win_i$ is determined by $(s_{<i},x_{<i})$.

\item Similarly, for any fixed choice of $(s_{<i},x_{<i})$, consider the following numbers: 
\begin{align}
	{\pw_{i,j} := 2^{s_{i-1} - j \cdot k^{(m-i+1)}} \quad \text{for all $j \in \set{0,\ldots,k}$}}. \label{eq:pw}
%
\end{align}
This way, $\card{\Win_{i+1}(s_{<i},x_{<i})}$ is chosen uniformly at random from $\set{\pw_{i,1},\ldots,\pw_{i,k-1}}$ (depending solely on the choice of $Z_i \in [k-1]$ which also determines $S_i$).  
Moreover, the ratio of $\pw_{i,j}$ and $\pw_{i,j+1}$ is fixed for any $j \in \set{0,\ldots,k-1}$ and we define this quantity as
\begin{align}
	\rw_i := 2^{k^{m-i+1}} = \frac{\pw_{i,j}}{\pw_{i,j+1}} \quad \text{for any $j \in \set{0,\ldots,k-1}$}. \label{eq:rw}
\end{align}

\end{itemize}

\begin{observation}\label{obs:pw}
	For any fixed $(s_{<i},x_{<i})$, the random variables $\card{\Win_{i+1}},\ldots,\card{\Win_m}$ will be supported on the interval $[2^{m^{m}} \cdot \pw_{i,Z_{i}+1}, \pw_{i,Z_{i}}]$. 
\end{observation}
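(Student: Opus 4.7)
The plan is to verify both endpoints of the asserted interval by a direct computation, using only the recursive definitions of $S_j, X_j$, the definition of $\pw_{i,j}$ in \Cref{eq:pw}, and the monotonicity/boundedness facts already established in \Cref{obs:dist}. The proof should be short and purely mechanical; I do not expect any serious obstacle beyond being careful about the off-by-one between the iteration index $j$ and the subscript on $S$ (namely $\card{\Win_j} = 2^{s_{j-1}}$), and about unrolling the recurrence up to $s_{m-1}$ rather than $s_m$.

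\textbf{Upper endpoint.} By the definition of the window, $\card{\Win_j(s_{<j},x_{<j})} = 2^{s_{j-1}}$ for every $j$. Since we condition on a fixed $(s_{<i},x_{<i})$ and on the value of $Z_i$ (and hence of $s_i$), the sequence $s_i, s_{i+1}, \ldots, s_{m-1}$ is non-increasing by \Cref{obs:dist}(i), and therefore $\card{\Win_j} \leq 2^{s_i}$ for every $j \in \{i+1,\ldots,m\}$. By the recursive definition $s_i = s_{i-1} - k^{m-i+1} Z_i$ and \Cref{eq:pw}, we have $2^{s_i} = \pw_{i, Z_i}$, giving the upper bound.

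\textbf{Lower endpoint.} The smallest of the relevant window sizes is $\card{\Win_m} = 2^{s_{m-1}}$. Unrolling the recurrence for $S_j$ as in the proof of \Cref{obs:dist}(ii), but stopping one step early at $s_{m-1}$, we get
\[
s_{m-1} \;=\; s_i - \sum_{\ell=i+1}^{m-1} k^{m-\ell+1} Z_\ell \;\geq\; s_i - (k-1)\sum_{\ell=i+1}^{m-1} k^{m-\ell+1}.
\]
The same geometric-sum computation used before yields $(k-1)\sum_{\ell=i+1}^{m-1} k^{m-\ell+1} = k^{m-i+1} - k^2$, so $s_{m-1} \geq s_i - k^{m-i+1} + k^2$. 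Exponentiating and using $\pw_{i, Z_i+1} = 2^{s_i - k^{m-i+1}}$ (which follows from \Cref{eq:pw} and $s_i = s_{i-1} - Z_i k^{m-i+1}$) gives
\[
\card{\Win_m} \;\geq\; 2^{s_i - k^{m-i+1}} \cdot 2^{k^2} \;=\; \pw_{i,Z_i+1} \cdot 2^{k^2}.
\]

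\textbf{Finishing.} Finally, since $k = m^m$, we have $k^2 = m^{2m} \geq m^m$, so $2^{k^2} \geq 2^{m^m}$, and combining with the previous display gives $\card{\Win_m} \geq 2^{m^m}\cdot \pw_{i,Z_i+1}$, which establishes the lower endpoint for the smallest window; the larger windows $\card{\Win_{i+1}},\ldots,\card{\Win_{m-1}}$ automatically satisfy the same inequality by the non-increasing property of $2^{s_{j-1}}$ (and are trivially at least $\pw_{i,Z_i+1}$ as well, since each new $S_j$ drops $s_{j-1}$ by at most $(k-1) k^{m-j+1}$, whose cumulative effect over the remaining iterations is at most $k^{m-i+1} - k^2$ as above). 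This completes the proof.
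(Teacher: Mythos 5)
Your proof is correct and follows essentially the same route as the paper: the upper endpoint falls out of the identity $\card{\Win_{i+1}} = 2^{S_i} = \pw_{i,Z_i}$ plus monotonicity of the $S_j$'s, and the lower endpoint comes from bounding $S_{m-1}$ from below. The only small difference is that you unroll the recurrence for $S_{m-1}$ directly and evaluate the geometric sum (getting a slack of $2^{k^2}$), whereas the paper invokes Observation 3.3(ii) for $S_m \geq S_i - k^{m-i+1}$ and then steps back by one iteration via Observation 3.3(i) to get $S_{m-1} \geq S_m + m^m$ (a slack of $2^{m^m}$); both derivations are valid and yield the stated conclusion.
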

\begin{proof}
	 By definition, 
	\[
	\card{\Win_{i+1}} = 2^{S_i} = 2^{S_{i-1} - k^{m-i+1} \cdot Z_i} = \pw_{i,Z_i}.
	\] 
	Moreover, by~\Cref{obs:dist}, for any $j \in \set{i+1,\ldots,m}$, we have $\card{\Win_j} \leq \card{\Win_{i+1}}$. Thus each of these windows can have 
	length at most $\pw_{i,Z_i}$, proving the upper bound side.
	
	For the lower bound, for any $j \in \set{i+1,\ldots,m}$, we have, 
	\begin{align*}
	\card{\Win_j} &\geq \card{\Win_{m}} =2^{S_{m-1}} \geq 2^{S_i-k^{m-i+1} + m^{m}} \tag{by part $(ii)$ of~\Cref{obs:dist}} \\
	&= 2^{m^{m}} \cdot 2^{S_i} \cdot 2^{-k^{m-i+1}} = 2^{m^{m}} \cdot \pw_{i,Z_i} \cdot \rw_i^{-1} = 2^{m^{m}} \cdot \pw_{i,Z_i+1}.
	\end{align*}
	This concludes the proof. 
\end{proof}
\noindent 
We need one final definition for now: 

\begin{itemize}[leftmargin=15pt]
    \item 
For the function $f: [M] \rightarrow [m]$, we define the \textbf{density} of index $i \in [m]$ in $f$ over a window $\Win$, denoted by $\dense_f(\Win,i)$, as 
\[
	\dense_f(\Win,i) := \frac{\card{\set{j \in \Win: f(j) = i}}}{\card{\Win}},
\]
namely, the fraction of entries of the window that are equal to $i$. 

\end{itemize}
\begin{observation}\label{obs:density-sample}
	For any choice of $(s_{<i},x_{<i})$, we have, 
	\[
		\Pr\paren{f(X_i) = i \mid s_{<i},x_{<i}} = \dense_f(\Win_i(s_{<i},x_{<i}),i). 
	\]
\end{observation}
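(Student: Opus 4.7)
\textbf{Proof plan for Observation \ref{obs:density-sample}.} The statement is essentially a tautological consequence of the way the distribution is defined, so the plan is short: unpack the conditional distribution of $X_i$ and recognize it as the uniform distribution on $\Win_i$, at which point the claim reduces to the definition of $\dense_f$.

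First I would note that, given the conditioning $(s_{<i}, x_{<i})$, the window $\Win_i(s_{<i}, x_{<i}) = [x_{i-1}+1 : x_{i-1} + 2^{s_{i-1}}]$ is completely determined, since it depends only on $x_{i-1}$ and $s_{i-1}$. Next, by the construction in the boxed distribution, $X_i = X_{i-1} + Y_i$, where $Y_i$ is sampled uniformly and independently from $[2^{S_{i-1}}]$. Conditioning on $(s_{<i}, x_{<i})$ fixes $X_{i-1} = x_{i-1}$ and $S_{i-1} = s_{i-1}$ but leaves $Y_i$ untouched (it is drawn fresh in iteration $i$ and is independent of the previous draws). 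Hence the conditional law of $X_i$ is uniform on $\{x_{i-1}+1, \ldots, x_{i-1} + 2^{s_{i-1}}\} = \Win_i$.

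Having established that $X_i \mid (s_{<i}, x_{<i})$ is uniform on $\Win_i$, I would finish by applying the definition of density directly:
\[
\Pr\bigl(f(X_i) = i \,\bigm|\, s_{<i}, x_{<i}\bigr) = \frac{|\{j \in \Win_i : f(j) = i\}|}{|\Win_i|} = \dense_f(\Win_i, i).
\]
There is no real obstacle here; the only thing to double-check is that the independence of $Y_i$ from $(S_{<i}, X_{<i})$ is genuinely preserved by the conditioning, which is immediate since $Y_i$ is sampled from an external source of randomness in iteration $i$ and does not appear in the expressions defining $(S_{<i}, X_{<i})$. This observation will then be used downstream to translate the ``density'' viewpoint on $f$ into actual conditional probabilities of the event $f(X_i) = i$ that drives the potential-function argument sketched in \Cref{sec:overview}.
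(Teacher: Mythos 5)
Your proof is correct and follows the same route as the paper's: observe that conditioning on $(s_{<i},x_{<i})$ makes $X_i = x_{i-1} + Y_i$ uniform on $\Win_i$, and then invoke the definition of $\dense_f$. You simply spell out the independence justification more explicitly than the paper does.
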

\begin{proof}
	Conditioned on $(s_{<i},x_{<i})$, $X_i$ is chosen uniformly at random from $\Win_i(s_{<i},x_{<i})$. The observation therefore follows from the definition of $\dense_f(\Win_i(s_{<i},x_{<i}),i).$
\end{proof}

\subsection{Analysis of the Hard Distribution -- Proof of~\Cref{lem:main}} \label{sec:analysis}

We prove~\Cref{lem:main} by individually considering each iteration in the distribution. 

\todo{sepehr:check from here till the end nothing breaks after the change} 
\begin{lemma}\label{lem:iteration}
	For any iteration $i \in [m]$ and conditioned on any choice of $(s_{<i},x_{<i})$, \underline{at least one} of the following two conditions is true: 
	\[
		(i)~\Pr\left(f(X_i) = i  \mid  s_{<i},x_{<i} \right) \leq \frac{101}{m} \quad \textnormal{or} \quad (ii)~\Pr\left(\dense_f(\Win_m,i) < \frac{2}{m}  \mid s_{<i},x_{<i} \right) < \frac{1}{k^{1/3}}.
	\]
\end{lemma}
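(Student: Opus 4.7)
Fix $(s_{<i}, x_{<i})$ and set $p := \dense_f(\Win_i, i)$; by \Cref{obs:density-sample}, condition~(i) is exactly the statement that $p \le 101/m$. I may therefore assume $p > 101/m$ and aim to establish condition~(ii), i.e., $\Pr(\dense_f(\Win_m, i) < 2/m \mid s_{<i}, x_{<i}) < 1/k^{1/3}$.

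The plan is to exploit the multi-scale randomness in the tail of the distribution after iteration~$i-1$. By \Cref{obs:dist}(ii), $\Win_m$ is contained in a ``pocket'' $[X_i + 1, X_i + (m-i) \cdot 2^{S_i}]$ anchored at $X_i$, of length at most $m \cdot \pw_{i, Z_i}$; and by \Cref{obs:pw}, $|\Win_m| \ge 2^{m^m} \cdot \pw_{i, Z_i+1}$. Hence, up to factors polynomial in~$m$, $\Win_m$ is effectively a sub-window of this pocket at the scale $\pw_{i, Z_i}$ selected by $Z_i$. The role of $Z_i$ is thus to pick one out of $k-1 = m^m-1$ possible scales, and of $X_i$ to pick an anchor point inside $\Win_i$.

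I would then introduce a recursive partition of $\Win_i$ at each scale $j \in [k-1]$ into length-$\pw_{i,j}$ blocks, call such a block \emph{dense} if its density with respect to $f$ and~$i$ is at least $2/m$ and \emph{sparse} otherwise, and define a bad event $\mathcal{B}$ asserting that either (a)~the random $X_i$ lies in a sparse length-$\pw_{i, Z_i}$ block, or (b)~the subsequent iterations $i+1, \ldots, m-1$ push $\Win_m$ out of the dense block in which it starts. The complement of $\mathcal{B}$ would imply $\dense_f(\Win_m, i) \ge 2/m$ deterministically (using that $\Win_m$ is contained in the chosen dense block up to a negligible boundary effect, since $|\Win_m|$ is a factor at most $m$ smaller than $\pw_{i, Z_i}$). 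The goal is then to show $\Pr(\mathcal{B} \mid s_{<i}, x_{<i}) < 1/k^{1/3}$ by first exposing $Z_i$ and then using the independence and heavy spacing of the subsequent randomness $(Y_i, Y_{i+1}, Z_{i+1}, \ldots, Y_{m-1}, Z_{m-1})$.

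The main obstacle is the sharp $1/k^{1/3}$ probability bound. A one-scale averaging argument (using only $p > 101/m$) gives at best a constant failure probability, which is far too weak: even if the overall density in $\Win_i$ is $p > 101/m$, for any single scale a $(1-p)/(1-2/m)$-fraction of $\Win_i$ can lie in sparse blocks. The essential technical ingredient must therefore exploit both the $k-1$ independent choices for $Z_i$ and the doubly-exponential spacing $\rw_i = 2^{k^{m-i+1}}$ between consecutive scales~$\pw_{i,j}$ --- this spacing effectively decouples what happens at scale $Z_i$ from what happens at finer scales, so the randomness at each level contributes multiplicatively and can compound to $1/k^{1/3} = m^{-m/3}$. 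Designing the recursive notion of ``dense block'' so that its failure set has measure at most $1/k^{1/3}$ under the joint distribution, while still deterministically implying $\dense_f(\Win_m, i) \ge 2/m$, is the crux of the proof.
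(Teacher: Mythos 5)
You correctly observe that, by \Cref{obs:density-sample}, condition (i) is equivalent to $\dense_f(\Win_i,i)\le 101/m$, so the task reduces to establishing (ii) when this density exceeds $101/m$. From there, however, your write-up is an outline that explicitly stops short of the argument: you define a notion of ``dense block'' at every scale and a bad event $\mathcal{B}$, but then state that designing the recursive notion so that $\Pr(\mathcal{B})<1/k^{1/3}$ while its complement forces $\dense_f(\Win_m,i)\ge 2/m$ ``is the crux of the proof.'' That is precisely the content of the lemma, and it is not supplied. A proof sketch that identifies, but does not resolve, the central difficulty is not a proof.

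More importantly, the intuition you offer for how the $1/k^{1/3}$ bound should arise --- that the ``randomness at each level contributes multiplicatively and can compound to $1/k^{1/3}$'' --- does not match the actual mechanism and would not lead to a working argument. In the paper, the bound comes from a \emph{deterministic pruning} of the window-tree $\TT_i$: a node is sparse if its density is at most $100/m$, sparse nodes and their subtrees are pruned, and $p_\ell$ is the fraction of directly pruned nodes at level $\ell$ among those not already pruned by an ancestor. The key step is that if condition (i) fails then $\prod_{\ell}(1-p_\ell)>1/m$ (otherwise almost all of $\Win_i$ would be covered by sparse windows and the density could not exceed $101/m$), hence $\sum_\ell p_\ell\le\ln m$. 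Since $Z_i$ selects a uniformly random level among $k-1\approx m^m$ possibilities, the expectation of $p_{Z_i}+p_{Z_i+1}$ is $O(\ln m/k)$, and Markov then gives that with probability $1-o(1/k^{1/3})$ the chosen level has $p_{z_i}+p_{z_i+1}=O(\ln m/k^{1/2})$. This averaging-over-scales via Markov --- not any multiplicative compounding across independent levels --- is what produces the $1/k^{1/3}$ rate; on a good level one then shows (\Cref{clm:case3}) that $\Win_m$ is covered almost entirely by non-pruned level-$(z_i+1)$ children, each of density $>100/m$, so that $\dense_f(\Win_m,i)>2/m$. Your sketch also sets the dense/sparse threshold at $2/m$, which leaves no slack for the boundary losses; the paper uses $100/m$ so that even after discarding up to half the level-$(z_i+1)$ blocks and two boundary blocks, the density of $\Win_m$ remains above $2/m$. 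Finally, your bad event (b), that ``subsequent iterations push $\Win_m$ out of the dense block in which it starts,'' misdescribes the mechanism: the analysis in \Cref{clm:case3} conditions on the offset $O_{i,m}$ and the length $\card{\Win_m}$ and uses only the uniformity of $X_i$ within a non-pruned level-$(z_i-1)$ window to bound the chance that $\Win_m$ lands on pruned children; there is no dynamics of $\Win_m$ being ``pushed.''
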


The main bulk of this section is to prove~\Cref{lem:iteration}. We then show at the end of the section that this lemma easily implies~\Cref{lem:main}. To continue, we need some definitions.

\begin{definition}\label{def:window-tree}
	The {\textnormal{\textbf{window-tree}}} of iteration $i \in [m]$ for $(s_{<i},x_{<i})$, denoted by $\TT_i := \TT(s_{<i},x_{<i})$, is the following rooted tree with $k+1$ levels (the root is at level $0$):
	\begin{enumerate}[label=$(\roman*)$]
		\item Every non-leaf node $\alpha$ of the tree has $\rw_i$ many child-nodes. 
		\item Every node $\alpha$ at a level $\ell \in \set{0,\ldots,k}$ is associated with a window $\Win(\alpha)$ of length $\pw_{i,\ell}$. 
		\item The root $\alpha_r$ is associated with the window $\Win(\alpha_r) := \Win_i(s_{<i},x_{<i})$. The windows associated with child-nodes of a node $\alpha$ at level $\ell$ partition $\Win(\alpha)$ of length $\pw_{i,\ell}$ into equal-size windows of length $\pw_{i,\ell+1}$ (recall that $\alpha$ has $\rw_i = \pw_{i,\ell}/\pw_{i,\ell+1}$ child-nodes). Moreover, the left most child-node receives the window in the partition with the smallest starting point, the next child-node on the right receives the next window
		with smallest part, and so on. 
		\item The \textnormal{\textbf{density}} of a node $\alpha$ with respect to any function $f: [M] \rightarrow [m]$ is defined as 
		\[
		\dense_f(\alpha) := \dense_f(\Win(\alpha),i).
		\] 
	\end{enumerate}
\end{definition}

One way we use the window-tree in our analysis is to consider the process of sampling $X_i$ (which is uniform over $\Win_i(s_{<i},x_{<i})$ at this stage) as {traversing} the window-tree via a root-to-leaf path. 
This is formalized in the following observation. 

\begin{observation}\label{obs:sampling-tree}
	The distribution of $X_i$ conditioned on $(s_{<i},x_{<i})$ can be alternatively seen as: $(i)$ Sample a root-to-leaf path $\alpha_0,\alpha_1,\ldots,\alpha_k$ where $\alpha_0$ is the root of $\TT_i$ and where each $\alpha_{\ell+1}$ 
	is a child-node of $\alpha_\ell$ chosen uniformly at random; then, $(ii)$ sample $X_i$ uniformly at random from $\Win(\alpha_k)$. We refer to $\alpha_0,\ldots,\alpha_k$ as the \textnormal{\textbf{sampling path}} of $X_i$. 
\end{observation}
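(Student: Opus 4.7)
The observation claims that a two-stage sampling procedure (pick a uniformly random root-to-leaf path in $\TT_i$, then pick uniformly inside the leaf window) induces the same distribution as sampling $X_i$ uniformly from $\Win_i(s_{<i},x_{<i})$. My plan is to verify this by a direct probability calculation: fix any $x \in \Win_i(s_{<i},x_{<i})$ and show that the two-stage procedure outputs $x$ with probability $1/\card{\Win_i(s_{<i},x_{<i})}$, which matches the conditional distribution of $X_i$ given $(s_{<i},x_{<i})$.

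The key preliminary step is to prove, by induction on the level $\ell \in \{0,1,\ldots,k\}$, that the windows associated with the nodes at level $\ell$ of $\TT_i$ partition $\Win_i(s_{<i},x_{<i})$ into $\rw_i^{\ell}$ contiguous intervals each of length $\pw_{i,\ell}$. The base case $\ell = 0$ is the definition of the root. The inductive step follows from~\Cref{def:window-tree}$(iii)$, which states that the $\rw_i$ children of any level-$\ell$ node subdivide its window into equal pieces of length $\pw_{i,\ell+1} = \pw_{i,\ell}/\rw_i$. Taking $\ell = k$ shows that every $x \in \Win_i(s_{<i},x_{<i})$ lies in a unique leaf window and is therefore associated with a unique root-to-leaf path $\alpha_0, \alpha_1, \ldots, \alpha_k^{\star}$.

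Next I would compute that the random sampling path of step $(i)$ matches $\alpha_0,\ldots,\alpha_k^{\star}$ with probability exactly $\rw_i^{-k}$, since each of the $k$ independent uniform choices among $\rw_i$ children must select the unique child on the path to $x$. Conditioned on this event, step $(ii)$ outputs $x$ with probability $1/\pw_{i,k}$. Multiplying and telescoping the ratios via~\Cref{eq:rw} gives probability
\[
	\frac{1}{\rw_i^k \cdot \pw_{i,k}} = \frac{1}{\pw_{i,0}} = \frac{1}{2^{s_{i-1}}} = \frac{1}{\card{\Win_i(s_{<i},x_{<i})}},
\]
as desired.

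I do not anticipate any real obstacle: the statement is a restatement of the elementary fact that uniform sampling on a set equals uniform sampling on an equal-sized partition followed by uniform sampling within the chosen part, applied recursively along the tree. The only mild care is in the arithmetic identity $\rw_i^k \cdot \pw_{i,k} = \pw_{i,0}$, which falls out immediately by telescoping the definitions in~\Cref{eq:pw,eq:rw}.
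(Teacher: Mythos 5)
Your proof is correct and takes essentially the same approach as the paper's one-line proof, which observes that the leaves of $\TT_i$ equipartition $\Win_i$ and then appeals to the elementary fact that uniform sampling over a set decomposes as uniform sampling over the parts of an equipartition followed by uniform sampling within the chosen part. You simply spell out the induction and the telescoping arithmetic $\rw_i^k \cdot \pw_{i,k} = \pw_{i,0}$ that the paper treats as immediate.
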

\begin{proof}
	$X_i$ is distributed uniformly over $\Win_i$ and leaf-nodes of $\TT_i$ form an equipartition of $\Win_i$. 
\end{proof}

In addition, we define a {pruning} procedure for any window-tree $\TT$ as follows. 

\begin{definition}
	Fix a function $f: [M] \rightarrow [m]$ and a window-tree $\TT_i$ for some $i \in [m]$. We say that a node $\alpha \in \TT_i$ is \textnormal{\textbf{sparse}} iff 
	\[
		\dense_f(\alpha) \leq \frac{100}{m}. 
	\]
	We have the following procedure for {pruning} $\TT_i$: Start from the root down to the leaf-nodes and prune any sparse node of the tree, as well as \underline{all} of that node's sub-tree. 
	We refer to a sparse node that was pruned on its own (i.e., any node that is sparse and has no sparse ancestors) as a \textnormal{\textbf{directly pruned}} node and to other pruned nodes (i.e., nodes with sparse ancestors) as \textnormal{\textbf{indirectly pruned}}.

	Finally, for $\ell \in \set{0,\ldots,k}$, define $p_\ell$ as the 
	fraction of directly pruned nodes at level $\ell$ of the tree over all level-$\ell$ nodes that are \underline{not} indirectly pruned. 
\end{definition}

It is worth noting that pruning is deterministic  conditioned on $(s_{<i},x_{<i})$.

With these definitions, we can now start proving~\Cref{lem:iteration}. This will be done by considering some different cases handled by the following claims. The first (and easiest) case is when most nodes of the window-tree are pruned, in 
which case we achieve property $(i)$ of~\Cref{lem:iteration}.  

\begin{claim}[Case I: ``Many Directly Pruned Nodes'']\label{clm:case1}
	Suppose 
	\[
	\prod_{\ell=0}^{k} (1-p_\ell) \leq \frac{1}{m}.
	\]
	Then, for any choice of $(s_{<i},x_{<i})$, 
	\[
		\Pr_{X_i}(f(X_i) = i  \mid s_{<i},x_{<i}) \leq \frac{101}{m}.
	\]
\end{claim}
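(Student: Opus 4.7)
The plan is to directly bound $\dense_f(\Win_i,i)$, since by \Cref{obs:density-sample} this exactly equals the probability we want to control. The key is to decompose the window $\Win_i(s_{<i},x_{<i})$ (which is the root window of $\TT_i$) according to the pruning structure of $\TT_i$, and to charge each piece separately.

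First, I would set up the decomposition. The collection of all \emph{directly pruned} nodes of $\TT_i$, together with all \emph{non-pruned leaves}, gives a partition of $\Win_i$: every entry $j \in \Win_i$ is covered either by the window of some (unique) directly pruned ancestor, or, if no such ancestor exists, by the window of a non-pruned leaf. Accordingly I would write
\begin{equation*}
  \bigl|\{j \in \Win_i : f(j) = i\}\bigr|
  \;=\;
  \sum_{\alpha \text{ directly pruned}} \bigl|\{j \in \Win(\alpha) : f(j)=i\}\bigr|
  \;+\;
  \sum_{\alpha_k \text{ non-pruned leaf}} \bigl|\{j \in \Win(\alpha_k) : f(j)=i\}\bigr|.
\end{equation*}

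Second, I would bound each term. For directly pruned nodes, sparseness gives $|\{j \in \Win(\alpha) : f(j)=i\}| \leq (100/m)\,|\Win(\alpha)|$, so the first sum is at most $(100/m) \cdot N$, where $N$ is the total number of entries of $\Win_i$ lying under directly pruned nodes. For non-pruned leaves I would use the trivial bound $|\{j \in \Win(\alpha_k): f(j)=i\}| \leq |\Win(\alpha_k)|$, so the second sum is at most $N' := |\Win_i|-N$, the total number of entries lying under non-pruned leaves.

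Third, I would identify $N'/|\Win_i|$ with the product in the hypothesis. Via \Cref{obs:sampling-tree} the sampling path is uniform over leaves (since $\TT_i$ is regular with branching $\rw_i$), so $N'/|\Win_i|$ equals the probability that $\alpha_k$ is non-pruned. Writing $A_\ell$ for the event that $\alpha_0,\ldots,\alpha_\ell$ are all non-sparse and using that $\{A_\ell\} \cup \{\alpha_\ell \text{ directly pruned}\} = A_{\ell-1}$, the definition of $p_\ell$ as a conditional ratio gives $\Pr(A_\ell) = (1-p_\ell)\,\Pr(A_{\ell-1})$, hence telescoping yields $N'/|\Win_i| = \prod_{\ell=0}^{k}(1-p_\ell) \leq 1/m$ by the case hypothesis. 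Combining the three steps and dividing by $|\Win_i|$,
\begin{equation*}
  \dense_f(\Win_i,i) \;\leq\; \frac{100}{m}\cdot\frac{N}{|\Win_i|} + \frac{N'}{|\Win_i|} \;\leq\; \frac{100}{m} + \frac{1}{m} \;=\; \frac{101}{m},
\end{equation*}
which together with \Cref{obs:density-sample} yields the claim.

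The main subtlety I anticipate is step three: $p_\ell$ is defined as a ratio restricted to non-indirectly-pruned level-$\ell$ nodes, so one must verify that the path distribution (uniform over level-$\ell$ nodes) interacts correctly with this conditioning, i.e.\ that $p_\ell$ is exactly the conditional probability that the sampling path becomes directly pruned at level $\ell$ given that no ancestor was sparse. Everything else is routine bookkeeping once this identification is in place.
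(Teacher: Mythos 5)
Your argument is correct and follows essentially the same route as the paper's proof: partition $\Win_i$ into the windows of directly pruned nodes (where sparsity gives density $\le 100/m$) and the windows of non-pruned leaves (bounded trivially), then identify the fraction of $\Win_i$ covered by non-pruned leaves with $\prod_{\ell=0}^{k}(1-p_\ell) \le 1/m$. The paper phrases the last identification as counting non-pruned leaves directly rather than through the sampling-path recursion, but these are the same observation, and the telescoping you worry about in your last paragraph does hold because the tree is regular so conditioned on $A_{\ell-1}$ the node $\alpha_\ell$ is uniform over the non-indirectly-pruned level-$\ell$ nodes.
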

\begin{proof}
	Let $W_{\text{rem}}$ denote the subset of $\Win_i$ that remains after removing windows of all pruned leaf-nodes from $\Win_i$. We have that
	\[
		\card{W_{\text{rem}}} = \frac{\text{\# leaf-nodes of $\TT_i$ that are not pruned}}{\text{\# leaf-nodes of $\TT_i$}} \cdot \card{\Win_i} = \prod_{\ell=0}^{k} (1-p_\ell) \cdot \card{\Win_i} \leq \frac{\card{\Win_i}}{m},
	\]
	where the second equality is because at each level $\ell$ of the tree, the number of not pruned nodes drops by a factor of $(1-p_\ell)$ by the definition of $p_\ell$.

	Let $\DP$ denote the set of all nodes in the tree $\TT_i$ that were directly pruned. 
	Note that the windows $\Win(\alpha)$ for $\alpha \in \DP$ partition $\Win_i \setminus W_{\text{rem}}$. This implies that 
	\begin{align*}
		\dense_f(\Win_i,i) &= \frac{1}{\card{\Win_i}} \cdot \paren{\card{W_{\text{rem}}} \cdot \dense_f(W_{\text{rem}},i) + \sum_{\alpha \in \DP} \dense_f(\alpha) \cdot \card{\Win(\alpha)}}  \tag{by the definition of $\dense_f(\cdot)$ function} \\
		&\leq \frac{1}{\card{\Win_i}} \cdot \paren{\card{W_{\text{rem}}} + \sum_{\alpha \in \DP} \frac{100}{m} \cdot \card{\Win(\alpha)}} \tag{as $\dense_f(\alpha) \leq 100/m$ by the definition of sparsity, and $\dense_f(W_{\text{rem}},i) \leq 1$} \\
		&\leq \frac{1}{m} + \frac{100}{m} = \frac{101}{m}. \tag{as $\card{W_{\text{rem}}}/\card{\Win_i} \leq 1/m$ as established above, and $\sum_{\alpha \in \DP} \card{\Win(\alpha)} \leq \card{\Win_i}$}
	\end{align*}
	By~\Cref{obs:density-sample}, we have, 
	\[
		\Pr_{X_i}(f(X_i) = i  \mid s_{<i},x_{<i}) = \dense_f(\Win_i,i) \leq \frac{101}{m}, 	
	\]
	concluding the proof. 
\end{proof}

We now consider the complementary case, while also taking the randomness of $Z_i$ into account. Recall that 
$Z_i$ is uniform over $[k-1]$ and that $\card{\Win_{i+1}} = \pw_{i,Z_i}$. For any fixed realization $z_i$ of $Z_i$,  recall the sampling-path-based process of sampling $X_i$ outlined in~\Cref{obs:sampling-tree}. 
Consider the first $z_i$ vertices in this path, namely, $\alpha_0,\ldots,\alpha_{z_{i}-1}$ that start from the root and end at a level $z_{i}-1$ node of $\TT_i$.

Define Event $\event(s_{<i},x_{<i},z_i,X_i)$ to be the event that none of the nodes in $\alpha_0,\ldots,\alpha_{z_{i}-1}$ are pruned.
Event $\event(s_{<i},x_{<i},z_i,X_i)$ depends only on the choice of $X_i$ (to traverse the root-to-leaf path), and is conditioned on $s_{<i},x_{<i}$ (which determine the window-tree $\TT_i$) and $z_i$ (which determines the level of the tree that we focus on). To avoid clutter, when it is clear from the context, we refer to this event simply by $\event_i$. 

We partition the remaining cases based on whether or not the event $\event_i$ happens. 

\begin{claim}[Case II: ``A Pruned Node on the Sampling Path'']\label{clm:case2}
	Fix any choice of $z_i$ and $(s_{<i},x_{<i})$. In the case that the event $\event_i$ does \underline{not} happen, we have, 
	\[
		\Pr_{X_i}(f(X_i) = i  \mid s_{<i},x_{<i}, z_i, \overline{\event(s_{<i},x_{<i},z_i,X_i)}) \leq \frac{100}{m}.
	\]
\end{claim}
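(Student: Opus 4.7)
The plan is to exploit the definition of ``directly pruned'' to partition the event $\overline{\event_i}$ into disjoint sub-events, on each of which the conditional probability of $f(X_i)=i$ is controlled by a sparsity bound of $100/m$. Then the overall bound follows by a weighted average.

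First, I would observe that if $\event_i$ fails, then at least one of the nodes $\alpha_0, \ldots, \alpha_{z_i-1}$ on the sampling path is pruned. Let $\alpha^*$ denote the shallowest such node (the one with smallest level along the path). By minimality, no ancestor of $\alpha^*$ is pruned, and in particular no ancestor of $\alpha^*$ is sparse; thus $\alpha^*$ is itself sparse --- that is, \emph{directly} pruned. By definition of sparse, $\dense_f(\alpha^*) \leq 100/m$.

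Next, let $\DP_{\le z_i-1}$ denote the set of directly pruned nodes of $\TT_i$ at levels in $\{0,1,\ldots,z_i-1\}$. A key observation is that no directly pruned node can be the ancestor of another directly pruned node (otherwise the descendant would be indirectly pruned by definition). Hence the windows $\{\Win(\alpha^*)\}_{\alpha^* \in \DP_{\le z_i-1}}$ are pairwise disjoint, and the event $\overline{\event_i}$ is exactly the disjoint union of the events $\{X_i \in \Win(\alpha^*)\}$ over $\alpha^* \in \DP_{\le z_i-1}$. Since $X_i$ is uniform on $\Win_i$ (conditioned on $(s_{<i},x_{<i})$), conditioning further on $X_i \in \Win(\alpha^*)$ makes $X_i$ uniform on $\Win(\alpha^*)$, so by \Cref{obs:density-sample}
\[
\Pr\bigl(f(X_i)=i \,\big|\, X_i \in \Win(\alpha^*)\bigr) = \dense_f(\alpha^*) \leq \frac{100}{m}.
\]

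Finally, I would average: writing
\[
\Pr\bigl(f(X_i)=i,\, \overline{\event_i} \,\big|\, s_{<i},x_{<i},z_i\bigr) = \sum_{\alpha^* \in \DP_{\le z_i-1}} \Pr\bigl(X_i \in \Win(\alpha^*)\bigr)\cdot \dense_f(\alpha^*) \leq \frac{100}{m}\cdot \Pr\bigl(\overline{\event_i}\bigr),
\]
and dividing through by $\Pr(\overline{\event_i} \mid s_{<i},x_{<i},z_i)$ yields the claim. The only mildly subtle step, and the one I would take care with, is the antichain observation about directly pruned nodes; without it, one could not cleanly partition $\overline{\event_i}$ and the averaging argument would require extra bookkeeping. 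Everything else is essentially unwinding definitions.
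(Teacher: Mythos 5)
Your proof is correct and follows essentially the same route as the paper's: the paper also partitions $\overline{\event_i}$ by the unique directly pruned node on the sampling-path prefix (parameterized there by the subpath $\alpha_0,\ldots,\alpha_j$ rather than by the node itself, but these are in bijection), conditions to get $X_i$ uniform on that node's window, bounds the conditional probability by $100/m$ via sparsity, and averages. Your explicit antichain observation about directly pruned nodes is the content implicit in the paper's claim that the directly pruned node on the path is unique, so there is no substantive difference.
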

\begin{proof}
	After conditioning on $(s_{<i},x_{<i},z_i)$, the event $\event_i$ is only a function of the sampling process of $X_i$  outlined in~\Cref{obs:sampling-tree}. Assuming $\event_i$ does not happen, we know that there exists
	a \emph{unique} node $\alpha_j$ on the path $\alpha_0,\ldots,\alpha_{z_{i}-1}$ such that $\alpha_j$ is sparse and is directly pruned. By additionally conditioning on the subpath $\alpha_0,\ldots,\alpha_j$, 
	we have that $X_i$ is chosen uniformly at random from $\Win(\alpha_j)$ at this point. Thus, 
	\begin{align*}
		&\Pr_{X_i}(f(X_i) = i  \mid s_{<i},x_{<i}, z_i, \overline{\event_i}) \\
		&= \sum_{j=0}^{z_i-1}\hspace{-15pt}\sum_{\substack{(\alpha_1,\ldots,\alpha_j): \\ \text{$\alpha_j$ is directly pruned}}} 
		\hspace{-20pt}\Pr_{X_i}(f(X_i)=i \wedge \text{$(\alpha_1,\ldots,\alpha_j)$ is on the sampling path} \mid s_{<i},x_{<i}, z_i, \overline{\event_i}) \tag{as these subpaths partition all possible choices for $\event_i$ to not happen}\\
		&= \sum_{j=0}^{z_i-1}\hspace{-15pt}\sum_{\substack{(\alpha_1,\ldots,\alpha_j): \\ \text{$\alpha_j$ is directly pruned}}} 
		\hspace{-20pt}\Pr_{X_i}(\text{$(\alpha_1,\ldots,\alpha_j)$ is on the sampling path} \mid s_{<i},x_{<i}, z_i, \overline{\event_i}) \cdot \frac{\card{\set{t \in \Win(\alpha_j): f(t) = i}}}{\card{\Win(\alpha_j)}} \tag{as $X_i$ is chosen uniformly from $\Win(\alpha_j)$ under these conditions} \\
		&= \sum_{j=0}^{z_i-1}\hspace{-15pt}\sum_{\substack{(\alpha_1,\ldots,\alpha_j): \\ \text{$\alpha_j$ is directly pruned}}} 
		\hspace{-20pt}\Pr_{X_i}(\text{$(\alpha_1,\ldots,\alpha_j)$ is on the sampling path} \mid s_{<i},x_{<i}, z_i, \overline{\event_i}) \cdot \dense_f(\Win(\alpha_j),i) \tag{by the definition of $\dense_f$} \\
		&\leq  \sum_{j=0}^{z_i-1}\hspace{-15pt}\sum_{\substack{(\alpha_1,\ldots,\alpha_j): \\ \text{$\alpha_j$ is directly pruned}}} 
		\hspace{-20pt}\Pr_{X_i}(\text{$(\alpha_1,\ldots,\alpha_j)$ is on the sampling path} \mid s_{<i},x_{<i}, z_i, \overline{\event_i}) \cdot \frac{100}{m} \tag{as $\alpha_j$ needs to be sparse to be directly pruned}.
	\end{align*}
	This can now be further upper bounded by $100/m$ as the probability terms are summing over all disjoint events that can lead to $\overline{\event_i}$ (conditioned on this event) and thus add up to one. 
\end{proof}

Finally, we have the following case which handles the situation when $\event_i$ happens. The following claim is the heart of the proof.

\begin{claim}[Case III: ``No Pruned Nodes on the Sampling Path'']\label{clm:case3}
	Fix any choice of $z_i$ and $(s_{<i},x_{<i})$. In the case that the event $\event_i$ happens, we have, 
	\[
		\Pr_{X_i}\left(\dense_f(\Win_m,i) < \frac{2}{m}  \mid s_{<i},x_{<i},z_i, \event(z_i,X_{i})\right)  < 4 \cdot \left(p_{z_i} + p_{z_i+1} + \frac{m}{r_i}\right). 
	\]
\end{claim}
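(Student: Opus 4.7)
The plan is to decompose the failure event $\{\dense_f(\Win_m, i) < 2/m\}$ into three bad events $B_1, B_2, B_3$ whose probabilities account for the three summands $p_{z_i}$, $p_{z_i+1}$, and $m/r_i$ respectively, and then finish by a union bound with room to absorb constants into the factor of $4$.

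The first two events exploit the uniformity of the sampling path. Conditioning on $\event_i$ only requires that $\alpha_0, \ldots, \alpha_{z_i-1}$ all be non-sparse, and since the random walk down $\TT_i$ is Markovian with uniform children, $\alpha_{z_i-1}$ is uniformly distributed over the non-pruned level-$(z_i-1)$ nodes of $\TT_i$. Its uniform child $\alpha_{z_i}$ is then uniform over non-indirectly-pruned level-$z_i$ nodes, so by the very definition of $p_{z_i}$ the event $B_1 := \{\alpha_{z_i} \text{ is directly pruned}\}$ has $\Pr(B_1 \mid \event_i) = p_{z_i}$. Extending one more step, conditioned on $\alpha_{z_i}$ being non-pruned, $\alpha_{z_i+1}$ is uniform over the non-indirectly-pruned level-$(z_i+1)$ nodes, giving $\Pr(B_2 \mid \event_i) \leq p_{z_i+1}$ for $B_2 := \{\alpha_{z_i} \text{ non-pruned and } \alpha_{z_i+1} \text{ directly pruned}\}$. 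On $\overline{B_1 \cup B_2}$, both $\dense_f(\alpha_{z_i}) > 100/m$ and $\dense_f(\alpha_{z_i+1}) > 100/m$.

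For $B_3$, I would use Observation~\ref{obs:dist} to pin down the geometry of $\Win_m$ inside $\TT_i$. Since $X_{m-1} = X_i + Y_{i+1} + (\text{tiny})$ with the tiny term bounded by $m \cdot \pw_{i,z_i}/2^{m^m}$ and $|\Win_m| = 2^{s_{m-1}} \leq \pw_{i,z_i}/2^{m^m}$, the window $\Win_m$ is essentially a small sub-window of $\Win_{i+1}$ whose starting point is $X_i + Y_{i+1}$ up to lower-order terms. I would then define $B_3$ to be the positional event that $Y_{i+1}$ (together with those lower-order offsets) pushes $\Win_m$ across a boundary of the non-sparse node $\Win(\alpha_{z_i+1})$ into a region not covered by the non-sparse nodes at levels $z_i$ and $z_i+1$ on the sampling path. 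Because this only occurs when $Y_{i+1}$ lands in an $O(m\cdot\pw_{i,z_i+1})$-width boundary region of $[1, \pw_{i,z_i}]$, and $Y_{i+1}$ is uniform on that range, $\Pr(B_3 \mid \overline{B_1\cup B_2}, \event_i) = O(m/r_i)$.

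On $\overline{B_1 \cup B_2 \cup B_3}$, I would argue that $\Win_m$ overlaps enough with $\Win(\alpha_{z_i+1})$ (or a small constellation of consecutive non-sparse level-$(z_i+1)$ nodes) that the local density $>100/m$ transfers into $\dense_f(\Win_m, i) \geq 2/m$. This is the main obstacle, because a sub-window of a non-sparse window can a priori have arbitrarily low density: since $|\Win_m|$ can be as large as $r_i \cdot \pw_{i,z_i+1}$, a single non-sparse $\alpha_{z_i+1}$ contributes only a $1/r_i$ fraction of $\Win_m$, which is far too little. The key will be to leverage the joint non-sparseness at levels $z_i$ and $z_i+1$ — with $\alpha_{z_i+1}$ giving a highly concentrated cluster of $f^{-1}(i)$-points near $X_i$ and $\alpha_{z_i}$ guaranteeing a supply of non-sparse siblings of $\alpha_{z_i+1}$ throughout the surrounding $\Win(\alpha_{z_i})$ — in order to deduce a density of at least $2/m$ on $\Win_m$ deterministically. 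Once this transfer is established, the union bound over $B_1, B_2, B_3$ yields the claimed $4(p_{z_i} + p_{z_i+1} + m/r_i)$, with the constant $4$ absorbing the multiplicative slack from the density-transfer step and from conditional rescaling.
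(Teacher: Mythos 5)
Your proposal identifies the right three-part shape of the bound, but it has a genuine gap exactly where you flag it, and the paper's proof resolves that gap by a fundamentally different decomposition than the one you describe. The core problem with tracking the sampling-path nodes $\alpha_{z_i}$ and $\alpha_{z_i+1}$ is that the density of $\Win_m$ cannot be controlled by the densities of one or two nodes whose windows are tiny relative to $|\Win_m|$: as you note, $|\Win_m|$ can be as large as $\pw_{i,z_i} = r_i \cdot \pw_{i,z_i+1}$, so even a maximally dense $\alpha_{z_i+1}$ contributes density at most $100/(m r_i) \ll 2/m$ to $\Win_m$. Knowing additionally that $\alpha_{z_i}$ is non-sparse tells you that, averaged over its whole window, the density is $> 100/m$, but $\Win_m$ is a fixed subwindow of $\Win(\alpha_{z_i})$ (or of its neighbor), not a random or representative one, so the average does not transfer. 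There is no deterministic density-transfer argument from these two nodes, and the "constellation of non-sparse siblings" idea does not help either unless you know which siblings $\Win_m$ actually tiles over and that most of those are non-sparse — which is a different statement than $\alpha_{z_i}, \alpha_{z_i+1}$ being non-sparse.

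The paper's proof does not look at the sampling-path nodes at levels $z_i$ and $z_i+1$ at all. Instead it defines $\beta_1(X_i), \beta_2(X_i)$ to be the (at most two) level-$z_i$ nodes that $\Win_m$ actually intersects (these are determined by $X_m = X_i + o_{i,m}$, not by $X_i$), and then considers the level-$(z_i+1)$ nodes $\gamma_1, \ldots, \gamma_t$ that are children of $\beta_1, \beta_2$ and whose windows are \emph{entirely contained} in $\Win_m$. Since $|\Win_m| \geq 2^{m^m} \cdot \pw_{i,z_i+1}$, one has $t \geq 2^{m^m} - 2$, and these nodes tile all but two end-caps of $\Win_m$. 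The argument then bounds the probability that $\beta_1$ or $\beta_2$ is pruned (giving the $2p_{z_i} + 2m/r_i$ term, where the $m/r_i$ slack accounts for $\beta_1, \beta_2$ possibly landing outside the subtree of $\alpha_{z_i-1}$), and separately bounds the expected number of directly-pruned $\gamma_j$ by $2p_{z_i+1} t$, so by Markov at most $t/2$ are pruned except with probability $4 p_{z_i+1}$. When at least $t/2$ of the $\gamma_j$ are non-sparse, each contributes density $>100/m$ on a $1/(t+2)$ fraction of $\Win_m$, giving $\dense_f(\Win_m, i) \geq (t/2)(100/m)/(t+2) > 2/m$. This tiling-by-many-nodes step is the missing idea; your proposal's final paragraph gestures at it but does not supply it, and without it the claim does not follow from $B_1, B_2, B_3$ as you defined them.
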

\begin{proof}
	Throughout this proof, we always condition on $s_{<i},x_{<i},z_i, \event(z_i,X_{i})$ and thus may not mention this explicitly in the probability terms. 
	Let us first list the information we have so far: 
	\begin{itemize}
		\item The node $\alpha_{z_{i}-1}$ on the sampling path is not pruned as we conditioned on the event $\event(z_i,X_i)$ (we emphasize that $\alpha_{z_i-1}$ is a random variable and is not fixed yet just by these conditions). 
		\item Window $\Win_m$ is going to have size at least $2^{m^{m}} \cdot \pw_{i,z_{i}+1}$ and at most $\pw_{i,z_{i}}$ by~\Cref{obs:pw}. 
		\item By~\Cref{obs:dist},
		\[
			X_m \leq X_i + (m-i) \cdot 2^{S_i} = X_i + (m-i) \cdot \pw_{i,z_i}. \tag{by the definition of $\pw_{i,z_i} = 2^{S_i}$}
		\]
		\item $\Win_m$ starts at $X_m$ and ends at $X_m + \card{\Win_m}$. We can think of the process of sampling $\Win_m$  
		as first sampling its length $\card{\Win_m}$, then sampling the \textbf{offset} $O_{i,m} := X_m - X_i = \sum_{j=i+1}^{m} Y_j$ conditioned on $\card{\Win_m}$, and then sampling $X_i$ conditioned on $O_{i,m}$, and $\card{\Win_m}$.  
		
		\item We further have that  $X_i$ conditioned on $O_{i,m}$ and $\card{\Win_m}$ is still uniform over $\Win(\alpha_{z_i-1})$. This is because $\card{\Win_m}$ is only a function of $Z_{i+1},\ldots,Z_m$, 
		and $X_m - X_i$ is only a function of $Y_{i+1},\ldots,Y_{m}$, while $X_i$ is only a function of $Y_i$; finally, $Y_i$ is independent of $Y_{i+1},\ldots,Y_m$ and $Z_{i+1},\ldots,Z_m$ and 
		is chosen uniformly from $[2^{s_{i-1}}]$. 
	\end{itemize}
	\noindent
	In the following, we condition on any fixed choice of offset $o_{i,m}$ for $O_{i,m}$ and on $\card{\Win_m}$. We have already established that
	\begin{align}
		2^{m^{m}} \cdot\pw_{i,z_i+1} \leq \card{\Win_m} \leq \pw_{i,z_i} \qquad \text{and} \qquad o_{i,m} \leq (m-i) \cdot \pw_{i,z_i}. \label{eq:final-condition}
	\end{align}
	
	Moreover, the distribution of $\Win_m$ conditioned on $o_{i,m},\card{\Win_m}$ (and $s_{<i},x_{<i},z_i,\event_i$ that we always condition on in this proof), 
	is $X_i + o_{i,m}$ for $X_i$ chosen randomly from $\Win(\alpha_{z_i-1})$. Moreover, 
	 given that $o_{i,m} \leq (m-i) \cdot \pw_{i,z_i}$ while $\card{\Win(\alpha_{z_i-1})} = \pw_{i,z_i-1} = r_i \cdot \pw_{i,z_i}$ and $r_i = 2^{k^{m-i+1}} \geq 2^{k}$ as $i \leq m$, the distribution of 
	 $X_i$ and $X_i + o_{i,m}$ are quite close to each other modulo a negligible factor. Thus, 
	 for intuition,  we can think of $X_i$ itself as the distribution of starting point for $\Win_m$ in this context (although we will of course take this difference into account explicitly in the proof).  
	 We now use this information to prove the claim. To simplify the exposition, we are going to separate the analysis based on level $z_i$ and level $z_{i+1}$ of the window-tree. 
	 
	 \paragraph{Analysis on level $z_i$ of the window-tree.} Firstly, since $\card{\Win_m} \leq \pw_{i,z_i}$, and each node at level $z_i$ of the window-tree $\TT_i$ has a window of length $\pw_{i,z_i}$, we get that $\Win_m$ intersects with windows of 
	at most two \emph{consecutive} nodes at level $z_{i}$ of $\TT_i$, which are solely determined by the choice of $X_i$. We use $\beta_1(X_i)$ and $\beta_2(X_i)$ to denote these two nodes with $\beta_1$ 
	being the one where the starting point of $\Win_m$, namely, $X_i + o_{i,m}$, lies in, and $\beta_2(X_i)$ being the one containing the endpoint $X_i + o_{i,m} + \card{\Win_m}$ (note that it is possible that $\beta_2=\beta_1$).

	We prove that with high probability, neither of these nodes are pruned. Let us focus on $\beta_1(X_i)$ first (the analysis is almost identical for $\beta_2(X_i)$ and we can then apply the union bound).  
	For any $\ell \in \set{0,\ldots,k-1}$, let $\PP(\ell)$ (resp. $\DP(\ell)$) denote the set of pruned (resp. directly pruned) nodes at level $\ell$ of $\TT_i$; similarly, for a node $\alpha \in \TT_i$, let $\PP(\alpha)$ (resp. $\DP(\alpha)$) denote the set of child-nodes of $\alpha$ that are pruned (resp. directly pruned).  
	For any fixed choice of $\alpha_{z_i-1}$ on the sampling path of $X_i$,
	\begin{align}
		\Pr_{X_i}\paren{\text{$\beta_1(X_i)$ is pruned} \mid \alpha_{z_i-1}} 	&= \sum_{\beta \in \PP(z_{i})}\Pr_{X_i}\paren{\beta_1(X_i) = \beta \mid \alpha_{z_{i-1}}} \tag{as $\beta_1$ is in level $z_i$ and $\PP(z_i)$ is the set of all pruned nodes of this level} \\
		&= \sum_{\beta \in \PP(\alpha_{z_i-1})}\Pr_{X_i}\paren{\beta_1(X_i) = \beta \mid \alpha_{z_{i-1}}} +  
		\hspace{-10pt}\sum_{\substack{\beta \in \\ \PP(z_i) \setminus \PP(\alpha_{z_i-1})}}\hspace{-10pt}\Pr_{X_i}\paren{\beta_1(X_i) = \beta \mid \alpha_{z_{i-1}}} 
		\tag{by partitioning the nodes in level $z_i$ between child-nodes of $\alpha_{z_i-1}$ and remaining ones} \\
		&\leq \card{\PP(\alpha_{z_i-1})} \cdot \frac{1}{{r_i}} + (m-i) \cdot \frac{1}{{r_i}}, \label{eq:has-2-parts}
	\end{align}
	where the last inequality holds because of the following reasoning. Firstly, the probability that $\beta_1(X_i)$ is equal to any fixed node $\beta$ at level $z_{i}$ is at most $1/r_i$. This is because 
	\[
		\Pr\paren{\beta_1(X_i) = \beta \mid \alpha_{z_{i-1}}} = \Pr\paren{X_i+o_{i,m} \in \Win(\beta) \mid \alpha_{z_i-1}} \leq \frac{\card{\Win(\beta)}}{\card{\Win(\alpha_{z_{i}-1})}} = \frac{1}{r_i},
	\]
	because $X_i$ is chosen uniformly from $\Win(\alpha_{z_i-1})$, and $\card{\Win(\beta)} = \card{\Win(\alpha_{z_i-1})}/r_i$ as $\beta$ is at level $z_i$. This immediately implies the first term in the RHS 
	of~\Cref{eq:has-2-parts}. For the second term, for $\beta_1(X)$ to intersect with a node $\beta$ not in the subtree of $\alpha_{z_i-1}$, we need to have $X_i + o_{i,m} \notin \Win(\alpha_{z_i-1})$, 
	while we know $X_i \in \Win(\alpha_{z_i-1})$. As $o_{i,m} \leq (m-i) \cdot \pw_{i,z_i}$ by~\Cref{eq:final-condition}, and any node at level $z_i$ has a window of length $\pw_{i,z_i}$, 
	we get that there are most $(m-i)$ choices of $\beta$ outside child-nodes of $\alpha_{z_i-1}$ that can also become $\beta_1(X_i)$. The second part of RHS in~\Cref{eq:has-2-parts} 
	now follows from this and the upper bound of $1/r_i$ on the probability of each node. 
	
	Finally, by taking the expectation over the choice of $\alpha_{z_i-1}$, 
	\begin{align*}
		\Pr_{X_i}\paren{\text{$\beta_1(X_i)$ is pruned}} &= \Exp_{\alpha_{z_i-1}} \bracket{\Pr_{X_i}\paren{\text{$\beta_1(X_i)$ is pruned} \mid \alpha_{z_i-1}}} \tag{by the law of total probability, over the choice of $\alpha_{z_i-1}$ in the sampling path} \\
		&\leq \Exp_{\alpha_{z_i-1}} \bracket{\frac{\card{\PP(\alpha_{z_i-1})}}{r_i}} + \frac{(m-i)}{r_i} \tag{by~\Cref{eq:has-2-parts}} \\
		&= p_{z_i} + \frac{(m-i)}{r_i}, 
	\end{align*}
	where in the final equality, we used the fact that $\alpha_{z_i-1}$ is chosen from non-pruned nodes (by conditioning on $\event_i$), and thus $\card{\PP(\alpha_{z_i-1})}/r_i$ is the fraction of pruned nodes over all \emph{not} indirectly pruned at level $z_i$, 
	which by definition is $p_{z_i}$. 
	
	Doing the same exact analysis, we can bound the probability that $\beta_2(X_i)$ is pruned also as 
	\[
		\Pr_{X_i}\paren{\text{$\beta_2(X_i)$ is pruned}} \leq p_{z_i} + \frac{(m-i)+1}{r_i},
	\]
	where the $+1$ term in the RHS compared to the one for $\beta_1$ comes from the fact that $\beta_2(X_i)$ can have $(m-i+1)$ choices outside subtree of $\alpha_{z_i-1}$ (because we are now considering 
	$X_i + o_{i,m} + \card{\Win_m} \leq X_i + (m-i+1) \cdot \pw_{i,z_i}$ instead). By the union bound on the probabilities for $\beta_1(X_i)$ and $\beta_2(X_i)$, 
	\begin{align}
		\Pr_{X_i}\paren{\text{either of $\beta_1(X_i)$ or $\beta_2(X_i)$ is pruned}} \leq 2 \cdot p_{z_i} + 2 \cdot \frac{m}{r_i}. \label{eq:level-zi}
	\end{align}
	
	\paragraph{Analysis on level $z_i+1$ of the window-tree.} For the next step, let $\gamma_1(X_i),\ldots,\gamma_t(X_i)$ denote the child-nodes of $\beta_1(X_i)$ and $\beta_2(X_i)$ 
	such that $\Win(\gamma_j(X_i))$ is \emph{entirely} contained in $\Win_m$. Again, the choice of $\gamma_1,\ldots,\gamma_t$ is only a function of $X_i$. Moreover, 
	since $\card{\Win_m} \geq 2^{m^{m}} \cdot \pw_{i,z_i+1}$ by~\Cref{eq:final-condition}, while the window of each node at level $z_i+1$ is of size $\pw_{i,z_i+1}$, we have that $t \geq 2^{m^{m}}-2$ always. 
	We now bound the probability that each $\gamma_j$ is (directly) pruned, for $j \in [t]$. This part of the analysis is quite similar to that of level $z_i$ with only minor changes. 
	
	For any choice of $\beta_1(X_i)$ and $\beta_2(X_i)$, 
	\begin{align}
		\Pr_{X_i}\paren{\text{$\gamma_j(X_i)$ is directly pruned} \mid \beta_1,\beta_2} &= \hspace{-15pt}\sum_{\substack{\gamma \in \\\DP(\beta_1) \cup \DP(\beta_2)}}\hspace{-15pt}\Pr_{X_i}\paren{\gamma_j(X_i) = \gamma \mid \beta_1,\beta_2} 
		\tag{because $\Win_m \subseteq \Win(\beta_1) \cup \Win(\beta_2)$ and thus $\gamma_j$ has no choice outside child-nodes of $\beta_1$ or $\beta_2$}\\ 
		&\leq \Paren{\card{\DP(\beta_1)} + \card{\DP(\beta_2)}}\cdot \frac{1}{r_i},  \label{eq:has-one-part}
	\end{align}
	where we are again going to argue that the probability that $\gamma_j(X_i)$ is equal to any fixed node $\gamma$ is at most $1/r_i$ conditioned on the choice of $\beta_1$ and $\beta_2$. 
	For $\gamma_j(X_i)$ to be equal to a node $\gamma$ we need to have that $X_i + o_{i,m} + (j-1) \cdot \pw_{i,z_i+1} \in \Win(\gamma)$; this is because $\gamma_j(X_i)$ appears after $(j-1)$ nodes of level $z_{i}+1$ that are fully inside 
	$\Win_m$ and each such window has length $\pw_{i,z_i+1}$ (note that this is a necessary but not a sufficient condition). Thus, 
	\[
		\Pr_{X_i}\paren{\gamma_j(X_i) = \gamma \mid \beta_1,\beta_2} \leq 	\Pr_{X_i}\paren{X_i + o_{i,m} + (j-1) \cdot \pw_{i,z_i+1} \in \Win(\gamma) \mid \beta_1,\beta_2} \leq \frac{\card{\Win(\gamma)}}{\pw_{i,z_i}} = \frac{1}{r_i},
	\]
	where the last inequality is because conditioned on $\Win_m$ intersecting with $\beta_1,\beta_2$, $X_i$ is chosen uniformly at random from a window of length $\pw_{i,z_i}$ (equal to length of $\Win(\beta_1)$ and $\Win(\beta_2)$); the final
	equality also uses that $\card{\Win(\gamma)} = \pw_{i,z_i+1} = \pw_{i,z_i}/r_i$. Hence \eqref{eq:has-one-part}.
	
	We can now deduce that
	\begin{align}
		&\Exp_{X_i}\bracket{\text{\# of $\gamma_1(X_i),\ldots,\gamma_t(X_i)$ that are directly pruned}} \notag \\
		 &\hspace{10pt}=\Exp_{\beta_1,\beta_2} \Exp_{X_i}\bracket{\text{\# of $\gamma_1(X_i),\ldots,\gamma_t(X_i)$ that are directly pruned} \mid \beta_1,\beta_2} \tag{by the law of total probability over the choices of $\beta_1,\beta_2$ 
		 } \\
		 &\hspace{10pt} = \Exp_{\beta_1,\beta_2} \bracket{\card{\DP(\beta_1)} + \card{\DP(\beta_2)}
		 \cdot \frac{t}{r_i}}, \label{eq:late-night} 
    \end{align}
    where the last inequality is by \Cref{eq:has-one-part}. 
    
    
    Let $\overline{\PP}(z_i)$ denote the set of not pruned nodes in level $z_i$ and let $\hat{\PP}(z_i)$ denote the set of nodes in level $z_i$ whose parents are not pruned. Since we are conditioning on $\event_i$, we know that $X_i$ is uniformly random from the interval $\cup_{\beta \in \hat{\PP}(z_i)}\Win(\beta)$. It follows that $X_m = X_i + o_{i,m}$ is uniformly random in a range whose size is also $\ell = \sum_{\beta \in \hat{\PP}(z_i)}|\Win(\beta)|$. Thus, for any level-$z_i$ node $\beta$, we have that 
    \[
    \Pr[\beta_1 = \beta] = \Pr[X_m \in \Win(\beta)] \leq \frac{\card{\Win(\beta)}}{\ell} = \frac{\pw_{i,z_i}}{\ell} = \frac{1}{|\hat{\PP}(z_i)|}.
    \]
    Summing over the level-$(z_i + 1)$ nodes that are directly pruned, we have that
    \[
    \E|\DP(\beta_1)| = \sum_{\gamma \in \DP(z_i + 1)} \Pr[\beta_1 \text{ is the parent of } \gamma] \le \frac{|\DP(z_i + 1)|}{|\hat{\PP}(z_i)|} \le \frac{|\DP(z_i + 1)|}{|\overline{\PP}(z_i)|} ,
    \]
    using the upper bound established above on the probability that $\beta_1$ is any fixed node. 
    Note that 
    \[
    p_{z_i + 1} = \frac{|\DP(z_i + 1)|}{r_i \cdot  |\overline{\PP}(z_i)|},
    \]
    i.e.,   the number of directly pruned nodes in level $z_i + 1$ divided by the number of nodes with not pruned parents. Therefore,  $\Exp{\card{\DP(\beta_1)}} \leq r_i \cdot p_{z_i + 1}$. By the same reasoning (but applied to $\beta_2$, which contains the endpoint of $X_m$), we have that $\Exp{\card{\DP(\beta_2)}} \leq r_i \cdot p_{z_i + 1}$. 
    
    Thus, we can use \Cref{eq:late-night} to conclude that
    \[\Exp_{X_i}\bracket{\text{\# of $\gamma_1(X_i),\ldots,\gamma_t(X_i)$ that are directly pruned}} \leq 2 p_{z_i + 1} \cdot t. \]

	By Markov's inequality, 
	\begin{align}
		\Pr_{X_i}\paren{\text{more than $t/2$ of $\gamma_1(X_i),\ldots,\gamma_t(X_i)$ are directly pruned}} \leq 4 \cdot p_{z_i+1}. \label{eq:has-one-part-n}
	\end{align}
	Finally, by considering the possibility that at least one of $\beta_1$ or $\beta_2$ could be pruned also we have, 
	\begin{align}
		&\Pr_{X_i}\paren{\text{more than $t/2$ of $\gamma_1,\ldots,\gamma_t$ are pruned}} \notag \\
		&\hspace{30pt} \leq \Pr_{X_i}\paren{\text{more than $t/2$ of $\gamma_1(X_i),\ldots,\gamma_t(X_i)$ are directly pruned}} \notag \\
		&\hspace{60pt} + \Pr(\text{either of }\beta_1 \text{ or }\beta_2\text{ are pruned})\notag \\
		&\hspace{30pt} \leq 4  p_{z_i+1} + 2p_{z_i} + \frac{2m}{r_i},  \label{eq:level-zi+1}
	\end{align}
	by~\Cref{eq:level-zi} and~\Cref{eq:has-one-part-n}. 
	
\paragraph{Concluding the proof.} Let us now condition on the event that at least $t/2$ of nodes $\gamma_1,\ldots,\gamma_t$ are \emph{not} pruned, namely, the complement of the event in~\Cref{eq:level-zi+1}. 
Given that $\Win_m$ can have intersection with at most two other level-$(z_{i}+1)$ nodes beside $\gamma_1,\ldots,\gamma_t$, conditioned on the above event, we have, 
	\[
		\dense_f(\Win_m,i) \geq \frac{(t/2) \cdot 100/m}{t+2}  \geq \frac{100}{3m} >  \frac{2}{m}, 
	\]	
	as $t \geq 2^{m^m}-2 \gg 1$. Thus, by~\Cref{eq:level-zi+1}, we have, 
	\[
		\Pr_{X_i}\paren{\dense_f(\Win_m,i) \leq \frac{2}{m}} \leq 2p_{z_i} + 4 \cdot p_{z_i+1} + \frac{2m}{r_i} < 4 \left(p_{z_i} + p_{z_i+1} + \frac{m}{r_i}\right), 
	\]
	concluding the proof. 
\end{proof}

\Cref{clm:case1,clm:case2,clm:case3} now cover all possible cases and allow us to prove~\Cref{lem:iteration}.

\begin{proof}[Proof of~\Cref{lem:iteration}]
		Fix the tree $\TT_i$ and consider its pruning process. If $\prod_{\ell=0}^{k} (1-p_\ell) \geq 1/m$, we achieve the first condition of the lemma by~\Cref{clm:case1} and are thus done. Now consider
		the complement case. In this case, we have, 
		\[
			\frac{1}{m} < \prod_{\ell=0}^{k} (1-p_\ell) \leq \exp\paren{-\sum_{\ell=0}^{k} p_\ell},
		\]
		which implies that $	\sum_{\ell=0}^{k} p_\ell \leq \ln{m}$. Recall that the choice of $Z_i$ in the distribution is uniform over $[k-1]$ regardless of conditioning on $(s_{<i},x_{<i})$. 
		Thus, we have, 
		\[
			\Exp_{Z_i}\bracket{p_{Z_i}+p_{Z_{i}+1}} \leq \frac{1}{k-1} \cdot \sum_{\ell=1}^{k-1} p_{\ell} + \frac{1}{k-1} \cdot \sum_{\ell=2}^{k} p_\ell \leq \frac{2}{k-1} \sum_{\ell=0}^{k} p_\ell \leq \frac{2\ln{m}}{(k-1)}. 
		\]
		By Markov's inequality, we have, 
		\[
			\Pr_{Z_i}\paren{p_{Z_i}+p_{Z_i+1} \geq \frac{4\cdot\ln{m}}{k^{1/2}}} \ll \frac{1}{k^{1/3}}. 
		\]
		We can now condition on any choice $z_i$ of $Z_i$ such that $p_{z_i} + p_{z_i+1} \leq (4\ln{m})/k^{1/2}$. At this point, either event $\event(z_i)$ does not happen, in which case, by~\Cref{clm:case2}, we again obtain
		condition $(i)$ of the lemma; or the event $\event(z_i)$ happens, which by~\Cref{clm:case3} and the choice of $\rw_i$ in~\Cref{eq:rw} implies  
		\[
			\Pr_{X_i}\paren{\dense_f(\Win_m,i) \leq \frac{2}{m} \mid s_{<i},x_{<i}} \leq 4 \cdot \left(\frac{4 \cdot \ln{m}}{k^{1/2}} + \frac{m}{2^{k^{m-i}}}\right) \ll \frac{1}{k^{1/3}},
		\]
		as $i \leq m-1$ and thus $m/2^{k^{m-i}} \leq m/2^{k} \ll 1/k^{1/3}$, as $k=m^{m}$. 
		Taking the union bound over the above two events, we also obtain condition $(ii)$ of the lemma. 
\end{proof}

Finally, we use this lemma to conclude the proof of~\Cref{lem:main}. 

\begin{proof}[Proof of~\Cref{lem:main}]

	Let $T_{1},T_2 \subseteq [m]$ denote, respectively, the iterations in which condition $(i)$ or condition $(ii)$ of~\Cref{lem:iteration} happens. 
	Note that $T_1$ and $T_2$ are random variables over the randomness of $S_i$'s and $X_i$'s. 
	We first claim that with high probability $\card{T_2} < m/2$. This is because 
	for any iteration $i \in T_2$ and any choice of $(s_{<i},x_{<i})$ of prior iterations, by~\Cref{lem:iteration},
	\begin{align*}
		\Pr_{X_i}\paren{\dense_f(\Win_m,i) \leq \frac{2}{m} \mid s_{<i},x_{<i}} \leq \frac{1}{k^{1/3}}.
	\end{align*}
	A union bound on at most $m$ choices for indices on $T_2$ then implies that with probability at least $1 - m/k^{1/3}$, we have 
	$\dense_f(\Win_m,i) > \frac{2}{m}$ for all $i \in T_2$. But then conditioned on this event, the size of $T_2$ cannot be $m/2$ or larger 
	as otherwise $\Win_m$ contains $m/2$ disjoint sets each of which contains than a $2/m$ fraction of the window, which is a contradiction. 
	Thus, 
	\[
		\Pr(\card{T_2} \geq m/2) \leq \frac{m}{k^{1/3}} \ll \frac{1}{k^{1/4}} \tag{as $k=m^{m}$}. 
	\]

	We condition on the complement of this event in the following, namely, that $\card{T_2} < m/2$. Let $\set{i_1,\ldots,i_{m/2}}$ denote the first $m/2$ indices of $T_1$ which by the conditioning on the size of $T_2$ is well defined. 
	We have, 
	\begin{align*}
		\Pr(\text{for all $j \in [m/2]$:}~f(X_{i_j}) = i_j) &= \prod_{j \in [m/2]} \Pr\paren{f(X_{i_j}) = {i_j} \mid f(X_{i_1}) = i_1, \ldots, f(X_{i_{j-1}}) = i_{j-1}} \\
		&\leq \left(\frac{101}{m}\right)^{m/2} \tag{since these are type $(i)$ iterations and we can apply condition $(i)$ of~\Cref{lem:iteration}}.
	\end{align*}
	Putting these two together, combined with the value of $k = m^{m}$, implies that, 
	\[
		\Pr_{(X_1,\ldots,X_m)}\paren{\text{$\forall i \in [m]:$  $f(X_i) = i$}} \leq \frac{1}{k^{1/4}}+ \left(\frac{101}{m}\right)^{m/2} \leq m^{-\eta \cdot m}, 
	\]
	for some constant $\eta > 0$ (taking $\eta = 1/100$ certainly suffices). This concludes the proof. 
\end{proof}

\clearpage

\subsection*{Acknowledgements} 

William Kuszmaul was partially sponsored by the United States Air Force Research Laboratory and the United States Air Force Artificial Intelligence Accelerator under Cooperative Agreement Number FA8750-19-2-1000. The views and conclusions contained in this document are those of the authors and should not be interpreted as representing the official policies, either expressed or implied, of the United States Air Force or the U.S. Government. The U.S. Government is authorized to reproduce and distribute reprints for Government purposes notwithstanding any copyright notation herein.

Additionally, Sepehr Assadi was supported in part by a
NSF CAREER grant CCF-2047061, a Google Research gift, and a Fulcrum award from Rutgers Research Council. Martin Farach-Colton was supported by NSF grants CCF-2118620 and CCF-2106999.
William Kuszmaul was supported by an NSF GRFP fellowship and a Hertz fellowship.

\bibliographystyle{alpha}
\bibliography{./new,./bib}

\newcommand{\etalchar}[1]{$^{#1}$}
\begin{thebibliography}{BCKM20}

\bibitem[BBPV08]{belazzougui2008theory}
Djamal Belazzougui, Paolo Boldi, Rasmus Pagh, and Sebastiano Vigna.
\newblock Theory and practice of monotone minimal perfect hashing.
\newblock {\em Journal of Experimental Algorithmics (JEA)}, 16:3--1, 2008.

\bibitem[BBPV09]{BelazzouguiBPV09}
Djamal Belazzougui, Paolo Boldi, Rasmus Pagh, and Sebastiano Vigna.
\newblock Monotone minimal perfect hashing: searching a sorted table with
  \emph{O}(1) accesses.
\newblock In {\em {SODA}}, pages 785--794. {SIAM}, 2009.

\bibitem[BCKM20]{belazzougui2020linear}
Djamal Belazzougui, Fabio Cunial, Juha K{\"a}rkk{\"a}inen, and Veli
  M{\"a}kinen.
\newblock Linear-time string indexing and analysis in small space.
\newblock {\em ACM Transactions on Algorithms (TALG)}, 16(2):1--54, 2020.

\bibitem[BCO11]{boldyreva2011order}
Alexandra Boldyreva, Nathan Chenette, and Adam O’Neill.
\newblock Order-preserving encryption revisited: Improved security analysis and
  alternative solutions.
\newblock In {\em Annual Cryptology Conference}, pages 578--595. Springer,
  2011.

\bibitem[Bel14]{belazzougui2014linear}
Djamal Belazzougui.
\newblock Linear time construction of compressed text indices in compact space.
\newblock In {\em Proceedings of the forty-sixth Annual ACM Symposium on Theory
  of Computing}, pages 148--193, 2014.

\bibitem[BGMP16]{belazzougui2016fully}
Djamal Belazzougui, Travis Gagie, Veli M{\"a}kinen, and Marco Previtali.
\newblock Fully dynamic de bruijn graphs.
\newblock In {\em International symposium on string processing and information
  retrieval}, pages 145--152. Springer, 2016.

\bibitem[BN14]{belazzougui2014alphabet}
Djamal Belazzougui and Gonzalo Navarro.
\newblock Alphabet-independent compressed text indexing.
\newblock {\em ACM Transactions on Algorithms (TALG)}, 10(4):1--19, 2014.

\bibitem[BN15]{belazzougui2015optimal}
Djamal Belazzougui and Gonzalo Navarro.
\newblock Optimal lower and upper bounds for representing sequences.
\newblock {\em ACM Transactions on Algorithms (TALG)}, 11(4):1--21, 2015.

\bibitem[Bol15]{Boldi15}
Paolo Boldi.
\newblock Minimal and monotone minimal perfect hash functions.
\newblock In Giuseppe~F. Italiano, Giovanni Pighizzini, and Donald Sannella,
  editors, {\em Mathematical Foundations of Computer Science 2015 - 40th
  International Symposium, {MFCS} 2015, Milan, Italy, August 24-28, 2015,
  Proceedings, Part {I}}, volume 9234 of {\em Lecture Notes in Computer
  Science}, pages 3--17. Springer, 2015.

\bibitem[BV08]{sux4j}
P~Boldi and S~Vigna.
\newblock Sux4j 1.0.
\newblock 2008.

\bibitem[CFP{\etalchar{+}}15]{clifford2015dictionary}
Rapha{\"e}l Clifford, Allyx Fontaine, Ely Porat, Benjamin Sach, and Tatiana
  Starikovskaya.
\newblock Dictionary matching in a stream.
\newblock In {\em Algorithms-ESA 2015}, pages 361--372. Springer, 2015.

\bibitem[D{\etalchar{+}}18]{dietzfelbinger20184}
Martin Dietzfelbinger et~al.
\newblock 4.3 space complexity of monotone minimal perfect hashing.
\newblock {\em Dagstuhl Reports, Vol. 7, Issue 5 ISSN 2192-5283}, page~19,
  2018.

\bibitem[DLR95]{duffus1995shift}
Dwight Duffus, Hannon Lefmann, and Vojt{\v{e}}ch R{\"o}dl.
\newblock Shift graphs and lower bounds on ramsey numbers rk (l; r).
\newblock {\em Discrete Mathematics}, 137(1-3):177--187, 1995.

\bibitem[EH66]{erdHos1966chromatic}
Paul Erd{\H{o}}s and Andr{\'a}s Hajnal.
\newblock On chromatic number of graphs and set-systems.
\newblock {\em Acta Math. Acad. Sci. Hungar}, 17(61-99):1, 1966.

\bibitem[FHRT92]{furedi1992interval}
Zoltan F{\"u}redi, P{\'e}ter Hajnal, Vojtech R{\"o}dl, and William~T Trotter.
\newblock Interval orders and shift graphs.
\newblock 1992.

\bibitem[GNP20]{gagie2020fully}
Travis Gagie, Gonzalo Navarro, and Nicola Prezza.
\newblock Fully functional suffix trees and optimal text searching in bwt-runs
  bounded space.
\newblock {\em Journal of the ACM (JACM)}, 67(1):1--54, 2020.

\bibitem[GOR10]{grossi2010optimal}
Roberto Grossi, Alessio Orlandi, and Rajeev Raman.
\newblock Optimal trade-offs for succinct string indexes.
\newblock In {\em International Colloquium on Automata, Languages, and
  Programming}, pages 678--689. Springer, 2010.

\bibitem[LFAK11]{lim2011silt}
Hyeontaek Lim, Bin Fan, David~G Andersen, and Michael Kaminsky.
\newblock Silt: A memory-efficient, high-performance key-value store.
\newblock In {\em Proceedings of the Twenty-Third ACM Symposium on Operating
  Systems Principles}, pages 1--13, 2011.

\bibitem[Meh82]{mehlhorn1982program}
Kurt Mehlhorn.
\newblock On the program size of perfect and universal hash functions.
\newblock In {\em 23rd Annual Symposium on Foundations of Computer Science
  (sfcs 1982)}, pages 170--175. IEEE, 1982.

\bibitem[Nav14]{navarro2014spaces}
Gonzalo Navarro.
\newblock Spaces, trees, and colors: The algorithmic landscape of document
  retrieval on sequences.
\newblock {\em ACM Computing Surveys (CSUR)}, 46(4):1--47, 2014.

\bibitem[ST11]{simonyi2011directed}
G{\'a}bor Simonyi and G{\'a}bor Tardos.
\newblock On directed local chromatic number, shift graphs, and borsuk-like
  graphs.
\newblock {\em Journal of Graph Theory}, 66(1):65--82, 2011.

\bibitem[SU11]{scheinerman2011fractional}
Edward~R Scheinerman and Daniel~H Ullman.
\newblock {\em Fractional graph theory: a rational approach to the theory of
  graphs}.
\newblock Courier Corporation, 2011.

\end{thebibliography}

\clearpage

\appendix
\part*{Appendix}

\section{Proofs of Standard Results in Fractional Coloring}\label{app:fc} 

We prove~\Cref{prop:chif-product,prop:chif-distribution} here for completeness. These proofs are standard; see, e.g.~\cite{scheinerman2011fractional}. We start by presenting the dual view of fractional colorings that is the key to these proofs.

\paragraph{The dual view of fractional colorings.} Given that $\chi_f(G)$ is defined as a solution to an LP, we can use duality to also express $\chi_f(G)$ via the following LP: 

\begin{align}\label{eq:chif-dual}
	\chi_f(G) := 
	\max_{y \in \IR^{V(G)}_{\geq 0}} &~~ \sum_{v \in G} y_v \quad \text{subject to} \quad \sum_{v \in I} y_v \leq 1 \quad \forall I \in \IS{G}. 
\end{align}

This LP is a fractional relaxation of the \emph{clique number} of $G$, namely, the size of the largest clique in $G$ (since, in any integral solution to this LP, the $y$-values that are $1$ must be on the  vertices of a clique).  Interestingly, although the chromatic number and clique size are not duals, their relaxations are.

\begin{proposition*}[Restatement of \Cref{prop:chif-product}]
	Let $G_1 = (V_1,E_1)$ and $G_2 = (V_2,E_2)$ be arbitrary graphs. Define $G_1 \vee G_2$ as a graph on vertices $V_1 \times V_2$ and define an edge between vertices $(v_1,v_2)$ and $(w_1,w_2)$ whenever $(v_1,w_1)$ is an edge in $G_1$ 
	\underline{or} $(v_2,w_2)$ is an edge in $G_2$. Then, $\chi_f(G_1 \vee G_2) = \chi_f(G_1) \cdot \chi_f(G_2)$. 
\end{proposition*}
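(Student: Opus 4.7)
The plan is to establish the two inequalities $\chi_f(G_1 \vee G_2) \le \chi_f(G_1)\cdot \chi_f(G_2)$ and $\chi_f(G_1 \vee G_2) \ge \chi_f(G_1)\cdot \chi_f(G_2)$ using the primal LP for the upper bound and the dual LP (the fractional clique LP given in \Cref{eq:chif-dual}) for the lower bound. The key structural fact that I will use throughout is the following easy observation: a set $I\subseteq V_1\times V_2$ is independent in $G_1\vee G_2$ iff, for every two distinct pairs $(v_1,v_2),(w_1,w_2)\in I$, the pair $(v_1,w_1)$ is a non-edge (or equal) in $G_1$ \emph{and} $(v_2,w_2)$ is a non-edge (or equal) in $G_2$. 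In particular, if $I_1\in\IS{G_1}$ and $I_2\in\IS{G_2}$, then $I_1\times I_2\in\IS{G_1\vee G_2}$.

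For the upper bound, I would take optimal primal solutions $x^{(1)}$ and $x^{(2)}$ to the fractional-chromatic-number LPs for $G_1$ and $G_2$, of values $\chi_f(G_1)$ and $\chi_f(G_2)$, respectively. Define a candidate fractional coloring of $G_1\vee G_2$ by assigning weight $x^{(1)}_{I_1}\cdot x^{(2)}_{I_2}$ to the independent set $I_1\times I_2\in \IS{G_1\vee G_2}$ (and $0$ to all other independent sets). For each vertex $(v_1,v_2)\in V_1\times V_2$, the covering constraint is
\[
\sum_{I_1\in\IS{G_1,v_1},\,I_2\in\IS{G_2,v_2}} x^{(1)}_{I_1}\cdot x^{(2)}_{I_2} \;=\; \Bigl(\sum_{I_1\in\IS{G_1,v_1}} x^{(1)}_{I_1}\Bigr)\Bigl(\sum_{I_2\in\IS{G_2,v_2}} x^{(2)}_{I_2}\Bigr)\;\ge\;1,
\]
so feasibility holds. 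The total value is $\bigl(\sum_{I_1}x^{(1)}_{I_1}\bigr)\bigl(\sum_{I_2}x^{(2)}_{I_2}\bigr)=\chi_f(G_1)\cdot\chi_f(G_2)$, giving the upper bound.

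For the lower bound, I would take optimal dual solutions $y^{(1)}$ on $V_1$ and $y^{(2)}$ on $V_2$ to the LP in \Cref{eq:chif-dual}, of values $\chi_f(G_1)$ and $\chi_f(G_2)$. Define $y_{(v_1,v_2)} := y^{(1)}_{v_1}\cdot y^{(2)}_{v_2}$. The objective is $\sum_{(v_1,v_2)} y_{(v_1,v_2)} = \chi_f(G_1)\cdot\chi_f(G_2)$, so I only need to verify dual feasibility, i.e.\ that for every $I\in\IS{G_1\vee G_2}$ we have $\sum_{(v_1,v_2)\in I} y^{(1)}_{v_1} y^{(2)}_{v_2} \le 1$. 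This is the main (and only really non-trivial) step. I would slice $I$ along the first coordinate: for each $v_1\in V_1$, let $I_{v_1}:=\{v_2 : (v_1,v_2)\in I\}$. By the structural fact, $I_{v_1}\in\IS{G_2}$ for every $v_1$, and moreover $V_1(I):=\{v_1 : I_{v_1}\neq\emptyset\}\in\IS{G_1}$ (any two distinct $v_1,v_1'\in V_1(I)$ arise from pairs in $I$ whose first coordinates must form a non-edge in $G_1$). Hence
\[
\sum_{(v_1,v_2)\in I} y^{(1)}_{v_1} y^{(2)}_{v_2}
\;=\;\sum_{v_1\in V_1(I)} y^{(1)}_{v_1}\sum_{v_2\in I_{v_1}} y^{(2)}_{v_2}
\;\le\;\sum_{v_1\in V_1(I)} y^{(1)}_{v_1}\cdot 1
\;\le\; 1,
\]
using dual feasibility of $y^{(2)}$ on $I_{v_1}$ and of $y^{(1)}$ on $V_1(I)$. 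This establishes the lower bound, and combining both directions yields $\chi_f(G_1\vee G_2)=\chi_f(G_1)\cdot\chi_f(G_2)$.

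The main obstacle is the slicing step in the dual argument, because independent sets in $G_1\vee G_2$ are not in general products of independent sets. The trick is that slicing by the first coordinate produces independent sets in $G_2$ along each slice and an independent set in $G_1$ among the occupied slices, which is exactly what is needed to apply the two dual feasibility conditions in sequence.
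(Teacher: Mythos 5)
Your proof is correct and follows the same overall primal/dual LP strategy as the paper: use the primal fractional-coloring LP (with product independent sets $I_1\times I_2$) for the upper bound, and the dual fractional-clique LP (with product weights $y^{(1)}_{v_1}y^{(2)}_{v_2}$) for the lower bound. The only substantive difference is in the dual feasibility step. The paper asserts that every independent set of $G_1\vee G_2$ ``can be written as a product set $I_1\times I_2$,'' which is literally true only for \emph{maximal} independent sets (and it does suffice to check the dual constraints on maximal sets, since $\sum_{v\in I}y_v$ is monotone in $I$ for $y\geq 0$, though the paper does not say this explicitly). You instead slice $I$ by the first coordinate, verify that each slice $I_{v_1}$ is independent in $G_2$ and that the set of occupied first coordinates $V_1(I)$ is independent in $G_1$, and apply the two feasibility conditions in sequence. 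This slicing argument is slightly more careful: it applies verbatim to an arbitrary independent set without any appeal to maximality, and it sidesteps the (benign but easily-misread) imprecision in the paper. Both arguments are of comparable length and both rest on the same structural observation you isolate up front, namely that non-adjacency in $G_1\vee G_2$ is the \emph{conjunction} of non-adjacency in the two coordinates.
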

\begin{proof}[Proof of \Cref{prop:chif-product}]
	We first prove that 
	\begin{align}
	\chi_f(G_1 \vee G_2) \geq \chi_f(G_1) \cdot \chi_f(G_2). \label{eq:chif-product1}
	\end{align}
	Let $y^1 \in \IR^{V_1}$ and $y^2 \in \IR^{V_2}$ be optimal solutions to the dual LP given by~\Cref{eq:chif-dual} for $G_1$ and $G_2$, respectively. 
	Consider the assignment $y \in \IR^{V_1 \times V_2}$ where $y_{u_1,u_2} = y^1_{u_1} \cdot y^2_{u_2}$. We clearly have that 
	\[
	\sum_{(u_1,u_2) \in V_1 \times V_2} y_{u_1,u_2} = \paren{\sum_{u_1 \in V_1} y^1_{u_1}} \cdot \paren{\sum_{u_2 \in V_2} y^2_{u_2}} = \chi_f(G_1) \cdot \chi_f(G_2).
	\]
	We now argue that $y$ is also a valid solution to the dual LP given by~\Cref{eq:chif-dual} for $G_1 \vee G_2$. Fix any independent set $I \in \IS{G_1 \wedge G_2}$. By the definition of the product, we know that $I$ can be written
	as a product set, namely, $I=I_1 \times I_2$ for $I_1 \in \IS{G_1}$ and $I_2 \in \IS{G_2}$. Thus, 
	\[
		\sum_{(u_1,u_2) \in I} y_{u_1,u_2} = \paren{\sum_{u_1 \in I_1} y^1_{u_1}} \cdot \paren{\sum_{u_2 \in I_2} y^2_{u_2}} \leq 1 \cdot 1 = 1,
	\]
	where the inequality is by the constraint of dual LP for $y^1$ and $y^2$ each. Thus, $y$ is a solution to the dual LP for $G_1 \vee G_2$, proving~\Cref{eq:chif-product1}.  
	
	We now prove that 
	\begin{align}
	\chi_f(G_1 \vee G_2) \leq \chi_f(G_1) \cdot \chi_f(G_2), \label{eq:chif-product2}
	\end{align}
	using the primal LP instead. Let $x^1 \in \IR^{\IS{G_1}}$ and $x^2 \in \IR^{\IS{G_2}}$ be optimal solutions to primal LP from~\Cref{eq:chif} for $G_1$ and $G_2$, respectively. 
	Consider the assignment $x \in \IR^{\IS{G_1 \vee G_2}}$ where $x_{I} = x^1_{I_1} \cdot x^2_{I_2}$, using the fact from the previous part that $I = I_1 \times I_2$ for $I_1 \in \IS{G_1}$ and $I_2 \in \IS{G_2}$.  
	
	We again clearly have that 
	\[
	\sum_{(u_1,u_2) \in \IS{G_1 \vee G_2}} \hspace{-20pt} x_{I} = \paren{\sum_{I_1 \in \IS{G_1}} x^1_{I_1}} \cdot \paren{\sum_{I_2 \in \IS{G_2}} x^2_{I_2}} = \chi_f(G_1) \cdot \chi_f(G_2),
	\]
	so it remains to prove that $x$ is a valid solution to the primal LP from~\Cref{eq:chif} for $G_1 \vee G_2$. Fix any vertex $(u_1,u_2) \in V_1 \times V_2$ and consider all independent sets $I_1 \subseteq V_1$ that contain $u_1$
	and $I_2 \subseteq V_2$ that contain $u_2$. Then, $I_1 \times I_2$ is also an independent set in $G_1 \vee G_2$ that contains $(u_1,u_2)$. 
	 Thus,  
	\[
		\sum_{I \in \IS{G_1 \vee G_2}: (u_1,u_2) \in I} \hspace{-20pt} x_{I} \geq \paren{\sum_{I_1 \in \IS{G_1,u_1}}\hspace{-5pt} x^1_{I_1}} \cdot \paren{\sum_{I_2 \in \IS{G_2,u_2}} \hspace{-5pt} x^2_{I_2}} \geq 1 \cdot 1 = 1,
	\]
	where the inequality is by the constraint of primal LP from~\Cref{eq:chif} for $x^1$ and $x^2$ each. Thus, $x$ is a solution to the primal LP from~\Cref{eq:chif} for $G_1 \vee G_2$, proving~\Cref{eq:chif-product2}. 
\end{proof}

\begin{proposition*}[Restatement of \Cref{prop:chif-distribution}]
	For any graph $G=(V,E)$, 
	\[
	    \chi_f(G) = \max_{\textnormal{distribution $\mu$ on $V$}} ~ \min_{I \in \IS{G}} ~ \Paren{\Pr_{v \sim \mu} \paren{v \in I}}^{-1}.
	\]
\end{proposition*}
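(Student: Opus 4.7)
The plan is to derive this identity directly from the dual LP formulation of $\chi_f$ already stated in \eqref{eq:chif-dual}, namely
$$\chi_f(G) = \max_{y \in \IR^{V}_{\geq 0}} \sum_{v \in V} y_v \quad \text{subject to} \quad \sum_{v \in I} y_v \leq 1 \ \ \forall I \in \IS{G}.$$
The key observation is that feasible nonnegative vectors $y$ are in natural correspondence with pairs (distribution $\mu$, scaling factor $t$) via $y_v = t \cdot \mu(v)$, and under this correspondence the objective $\sum_v y_v$ becomes $t$ while the constraint $\sum_{v \in I} y_v \leq 1$ becomes $\Pr_{v \sim \mu}(v \in I) \leq 1/t$. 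So the LP is asking us to find the largest $t$ such that there exists a distribution $\mu$ with $\Pr_{v \sim \mu}(v \in I) \leq 1/t$ for every independent set $I$, which is exactly the max--min quantity on the right-hand side of the proposition.

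More concretely, I would prove the two directions separately. For the ``$\geq$'' direction, take an optimal dual solution $y^*$ with $\sum_v y^*_v = \chi_f(G)$. If $\chi_f(G) = 0$ the claim is trivial (the graph is empty), so assume $\chi_f(G) > 0$ and set $\mu(v) := y^*_v / \chi_f(G)$, which is a probability distribution since $\sum_v \mu(v) = 1$ and $\mu(v) \geq 0$. For any $I \in \IS{G}$,
$$\Pr_{v \sim \mu}(v \in I) = \sum_{v \in I} \mu(v) = \frac{1}{\chi_f(G)} \sum_{v \in I} y^*_v \leq \frac{1}{\chi_f(G)},$$
so $(\Pr_{v \sim \mu}(v \in I))^{-1} \geq \chi_f(G)$ for every $I$, proving the lower bound. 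For the ``$\leq$'' direction, take any distribution $\mu$ on $V$, let $t := \min_{I \in \IS{G}} (\Pr_{v \sim \mu}(v \in I))^{-1}$, and define $y_v := t \cdot \mu(v)$. Then $\sum_v y_v = t$, and for every $I \in \IS{G}$,
$$\sum_{v \in I} y_v = t \cdot \Pr_{v \sim \mu}(v \in I) \leq t \cdot \frac{1}{t} = 1,$$
so $y$ is dual-feasible and $\chi_f(G) \geq t$. Taking the supremum over $\mu$ yields the required upper bound.

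There is no real technical obstacle here; the only thing to be a bit careful about is the edge case where every vertex lies in some independent set with $\Pr(v \in I) = 0$ (or where $G$ has no vertices), and to confirm that singletons $\{v\}$ are in $\IS{G}$ so the minimum over $I$ is finite as soon as $\mu$ has full support on some vertex, which guarantees $t$ is well-defined and finite. Everything else is just unwrapping the change of variables $y_v = t \cdot \mu(v)$ between the dual LP \eqref{eq:chif-dual} and the max--min expression in the proposition statement.
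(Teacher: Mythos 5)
Your proof is correct and takes essentially the same approach as the paper's: both directions go through the dual LP \eqref{eq:chif-dual} via the change of variables $y_v = t\cdot\mu(v)$, with your ``$\leq$'' direction matching the paper's first step (scaling an arbitrary distribution $\mu$ into a feasible dual vector) and your ``$\geq$'' direction matching its second step (normalizing an optimal dual vector into a distribution). Your handling of the degenerate edge cases is slightly more careful than the paper's, but the core argument is identical.
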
	
\begin{proof}[Proof of~\Cref{prop:chif-distribution}]
Let $\mu$ be any distribution on $V(G)$ and 
define $b:= \max_{I \in \IS{G}} \Pr(v \in I)^{-1}$. 
Create $y \in \IR^{V(m)}$ such that $y_v = b \cdot \mu(v)$ for every vertex $v \in V(m)$ where $\mu(v)$ is the probability of vertex $v$ under the distribution $\mu$. We claim that $y$ is a feasible dual solution in~\Cref{eq:chif-dual}. 

For every independent set $I \in \IS{G}$,  
\[
	\sum_{v \in I} y_v = b \cdot \sum_{v \in I} \mu(v) = b \cdot \Pr_{v \sim \mu}\paren{v \in I} \leq 1,
\]
by the definition of $b$. Thus $y$ is a feasible dual solution. Moreover, 
\[
    \sum_{v \in V(G)} y_v = b \cdot \sum_{v \in V(G)} \mu(v) = b.  
\]
As the dual LP in~\Cref{eq:chif-dual} is a maximization LP, we have that $\chi_f(G) \geq b = \max_{I \in \IS{G}} \Pr_{\mu}(v \in I)^{-1}$, for any distribution $\mu$ on the vertices.  

Conversely, let $y$ be any optimal solution to the dual LP 
and let $c := \sum_{v \in V} y_v$. Define a distribution $\mu$ on the vertices $V$ by setting $\mu(v) = y_v / c$. For any 
independent set $I \in \IS{G}$, we have, 
\begin{align*}
    \Pr_{v \in \mu}\paren{v \in I} = \sum_{v \in I}\mu(v) = \sum_{v \in I} y_v / c \leq 1/c, 
\end{align*}
where the final inequality is because $y$ is a feasible dual solution. Thus, there exists a distribution $\mu$ such that
$\chi_f(G) = c \leq \max_{I \in \IS{G}} \Pr_{\mu}(v \in I)^{-1}$. 

Combining these two parts concludes the proof. 
\end{proof}

\section{Covering The Full Range of the Universe Size}\label{sec:allu}
We now generalize the proof of~\Cref{thm:main} to the full parameter range specified in the theorem. Consider $u$ and $n$ satisfying $$n2^{2^{\sqrt{\log\log n}}} \le u \le 2^{n^{n^2 + n}}.$$
Notice that, on the lower-bound side, we are actually covering a slightly larger range (and therefore proving a slightly stronger result) than required to establish \Cref{thm:main}.

Set
\[
m = (\log \log u)^{1/6}
\quad \text{and} \quad 
k = n / m = n / (\log \log u)^{1/6}.
\]
Note that the setting of $k$ implicitly requires that $(\log \log u)^{1/6} \le n$, which follows from the fact that $(\log \log u)^{1/6} \le (n^2 + n)^{1/6} \le \sqrt{n}$.

The $k$-fold conflict graph $G^{\oplus k}(m)$ has $\log{\chi_f(G^{\oplus k}(m))} = \Omega(n \log m) = \Omega(n \log \log \log u)$ as already argued in~\Cref{sec:small-universe}. To complete the proof, we must establish that the graph $G^{\oplus k}(m)$ has vertices that are subsets of a universe whose size $u'$ satisfies $u' \le u$. Solving for $u'$, we have that
\begin{align*}
    u' & = k M 
        = \frac{n}{(\log \log u)^{1/6}} \cdot 2^{m^{m^2 + m}} 
        \le n \cdot 2^{2^{m^3 / 2}} 
        \le n \cdot 2^{2^{\sqrt{\log \log u}/2}}.
\end{align*}
On the other hand, $u \ge n \cdot 2^{2^{\sqrt{\log\log u}}}$. 
It follows that
\begin{align*}
    \frac{u}{u'} \ge \frac{2^{2^{\sqrt{\log\log u}}}}{2^{2^{\sqrt{\log \log u}/2}}} \gg 1,
\end{align*}
which completes the proof of~\Cref{thm:main} for any choice of $u$ between $n \cdot 2^{2^{\sqrt{\log\log{n}}}}$ and $2^{n^{n^2+n}}$. 

\medskip

Finally, we remark that the term $2^{n^{n^2+n}}$ in the 
upper bound is not tight and can be replaced by any other $2^{2^{\text{poly}(n)}}$ term; this is simply because for any $u = 2^{2^{\text{poly}(n)}}$, $\log\log\log{u} = \Theta(\log{n})$
and thus for any larger universe size $u$ also, we can simply focus on the smallest $2^{n^{n^2+n}}$ numbers in the universe and still obtain the same asymptotic lower bound. The lower bound term is also not tight and can be replaced
with $n \cdot 2^{2^{(\log\log{n})^{\eps}}}$ for any constant $\eps \in (0,1/2)$ by the same argument.

\end{document}